\theoremstyle{plain}
\newtheorem{lemma}{Lemma}[section]
\theoremstyle{definition}
\newtheorem{definition}[lemma]{Definition}
\newcommand{\df}{\mathrm{d}}
\begin{document}
\preprint{APS/123-QED}

\title{Inflated Graph States Refuting Communication-Assisted LHV Models}

\author{Uta Isabella Meyer}
\email{uta-isabella.meyer@lip6.fr}
\author{Frédéric Grosshans}
\email{frederic.grosshans@lip6.fr}
\author{Damian Markham}
\email{damian.markham@lip6.fr}
\affiliation{Sorbonne Université, CNRS, LIP6, F-75005 Paris, France}

\date{\today}

\begin{abstract}
Standard Bell inequalities hold when distant parties are not allowed to communicate. Barrett et al.\@ found correlations from Pauli measurements on certain network graphs refute a local hidden variable (LHV) description even allowing some communication along the graph.
This has recently found applications in proving separation between classical and quantum computing, in terms of shallow circuits, and distributed computing. 
The correlations presented by Barrett et al.\@ can be understood as coming from an extension of three party GHZ state correlations which can be embedded on a graph state.
In this work, we propose systematic extensions of any graph state, which we dub \textit{inflated graph states} such that they exhibit correlations which refute any communication assisted LHV model.
We further show the smallest possible such example, with a 7-qubit linear graph state, as well as specially crafted smaller examples with 5 and 4 qubits.
The latter is the smallest possible violation using binary inputs and outputs. 

\end{abstract}
\maketitle

\hyphenation{sub--mea-sure-ment}
\section{Introduction}
A simple undirected graph is a set of vertices and edges connecting the vertices. Graph states are defined in one to one correspondence with a simple undirected graph, where the vertices represent qubits, and each edge corresponds to a preparation entanglement operation.
In quantum information, graph states form an important class of multipartite entangled states \cite{hein2004multiparty} including Bell states, GHZ states \cite{Greenberger1989}, stabiliser code states and cluster states. Their application ranges from universal resources for quantum computation \cite{raussendorf2001one}, to error correction and fault tolerance \cite{schlingemann2001quantum,campbell2017roads}, quantum metrology \cite{shettell2020graph} and to quantum network protocols such as secret sharing \cite{markham2008graph} and anonymous communication \cite{christandl2005quantum}.

Non-locality is an increasingly recognised resource for quantum information \cite{brunner2014bell}. The standard setting for non-locality is of separated laboratories which are prohibited from communicating, where we look at the correlations that come out of local measurements. If they cannot be reproduced by a local hidden  variable models, or equivalently with shared classical randomness, we say they are non-local, and they give rise to quantum advantages in non-local games and communication complexity and are behind device independent security  \cite{brunner2014bell}. Any non-trivially connected graph state can demonstrate non-locality \cite{guhne2005bell}. 
In particular no local hidden variable (LHV) model can describe all Pauli measurements on graph states with more than two connected vertices \cite{scarani2005nonlocality}.

Barrett et al.\@ \cite{barrett2007modeling} showed that certain graph states still display non-locality even when the LHV model is assisted by classical communication along the graph's edges. Their construction essentially takes the three party GHSZ paradox \cite{greenberger1990bell}, which corresponds locally to a triangle graph state where different Pauli measurements are performed on each corner, and extends it by adding vertices along the edges and performing fixed measurements on these additional vertices. The resulting correlations not only violate an LHV description, but also an LHV description extended to allow communication along the edges of the graph up to a fixed distance.

As well as its fundamental interest, this extension of the standard setting for non-locality has given rise to several new applications. It is behind the only assumption free proof of the separation between quantum and classical computational power in the context of shallow circuits \cite{bravyi2018quantum}. It has further been used to show quantum advantage in distributed computing \cite{gall2018quantum} and to certify randomness replacing the assumptions of no-communication by assumptions on circuit depth of an adversary \cite{coudron2018trading}.


In this work we extend the class of graph states that exhibit non-locality when allowing distance-bounded classical communication. 
We follow the model for networks and allowed communication as presented by Barrett et al.\ \cite{barrett2007modeling}.
For bounding classical communication we assume that the vertices can exchange classical information if they are connected by a bounded (yet not vanishing) number of edges. 
Note that this is a reasonable assumption as, given a graph state, vertices sharing an edge must have interacted in the past (at least indirectly). 
In particular we consider classical communication restricted to a fixed bounded distance $d$ over the graph edges. We denote an LHV model assisted by such classical communication as $d$-LHV$^\ast$.
We construct graph states and formulate GHSZ-like paradoxes \cite{greenberger1990bell} that impose an all-or-nothing constraint on any $d$-LHV$^\ast$ model. 
Using the formulation by Mermin \cite{mermin1990simple}, this leads to a Bell inequality that a given graph state violates.
We also explore examples for small graph states.
We show that the smallest graph state that refutes a $d$-LHV$^\ast$ model with a GHSZ-like paradox using Pauli measurements is a 5-qubit cycle.
We also provide a Bell inequality for the smallest possible graph state of four qubits that violates any $d$-LHV$^\ast$ model
with binary inputs and outputs. It uses Clifford operations instead of Pauli operators.

%
The article is structured as follows. 
We begin in section \ref{sec:graph} with an introduction of graph states and some definitions for our construction. In section \ref{sec:non} we present general conditions for sets of measurements on graph states whose correlations cannot be mimicked by any LHV-model, and then any $d$-LHV$^\ast$-model. In section \ref{sec: inflated graph states} we present the inflated graph states, and some definitions and sets of measurements used for our construction. Our main result appears in section \ref{sec: main result}. We define sets of measurements for any inflated graph state, and prove that they satisfy the conditions set out in section \ref{sec:non}, hence proving that they cannot be mimicked by any $d$-LHV$^\ast$-model. In section \ref{sec: bell inquality} we present the simple Bell inequalities, and in section \ref{sec:example} we explore small graph state examples. We close with discussions in section \ref{sec:conc}.

\section{Graph States}
\label{sec:graph}
We denote a graph $G = (V,E)$ with a set of vertices $V$ and edges $E \subset (V\times V)$. 
When each vertex represents a qubit, the associated graph state $\vert G \rangle$ is a quantum state on $n = \vert V \vert$ qubits that uniquely corresponds to the graph $G$. 
The graph state is defined as the unique eigenvector with eigenvalue $1$ to the generator elements
\begin{equation}  g_v = X_v \bigotimes_{(u,v)\in E} Z_u\,, \label{eq:vertexstabelement}\end{equation} 
for all $v \in V$ and with the Pauli operators $\sigma_v=\lbrace \mathds{1}_v, X_v,Y_v,Z_v \rbrace$. 
The generator elements commute pairwise and generate the stabiliser group $\mathcal{S} = \langle \lbrace g_v\,, v\in V\rbrace \rangle$. As such any product is a stabiliser element with a decomposition 
\begin{equation} S = \prod_{u\in U\subset V} g_u \,.\label{eq:stab-decomposition}\end{equation} 
Applying Eq.~\eqref{eq:vertexstabelement} we can write any stabiliser element as a product of Pauli operators and a sign factor \begin{equation} S = \chi^{\phantom{\prime}} \bigotimes_{v\in V} \sigma_v \,.\label{eq:stab-decomposition2}\end{equation}
We will now specify a map between both representations, Eq.~\eqref{eq:stab-decomposition} and Eq.~\eqref{eq:stab-decomposition2}, that is local to the vertices and its nearest neighbours.
Starting from any stabiliser element $S$ as a product of generator elements with an index set $U \subset V$, we determine\begin{equation} \delta_{u} = \begin{cases} 0\,,~ u \notin U\,,\\    1 \,,~ u \in U\,,    \end{cases}\end{equation} and $t_u = \sum_{(u,v)\in E} \delta_{v} \bmod 4$ for every vertex $u \in V$. 
The acquired three bits of information suffice to uniquely assign a Pauli operator with a sign factor to every vertex by using 
\begin{equation}
\chi^{\phantom{\prime}}_u \sigma_u = \mathrm{i}_{\phantom{u}}^{\delta_u t_u} X_u^{\delta_u} Z_u^{t_u} \,.\label{eq:rules}
\end{equation}
Then, the sign factor in Eq.~\eqref{eq:stab-decomposition2} is $\chi^{\phantom{\prime}} = \prod_{v\in V} \chi^{\phantom{\prime}}_v$.
Conversely, applying Eq.~\eqref{eq:rules} can be used to determine whether any Pauli product is a stabiliser element and provide the decomposition into generator elements.\\

When using graph states, we are very often interested in performing Pauli measurements, whose operators are tensor products of Pauli operators. On a given graph state Pauli measurements have deterministic outcome $\pm 1$ if they are proportional to a stabilizer element as in Eq.~\eqref{eq:stab-decomposition2}, otherwise the outcome is random with expectation value $0$. 
When measuring a Pauli product, the outcome can be regarded as a product of the local Pauli operator's outcome.
Checking a stabiliser condition, for example, corresponds to checking that the product of all the involved Pauli operators gives the correct sign (according to (\ref{eq:stab-decomposition2})).
Barrett et al.\cite{barrett2007modeling} provide a $1$-LHV$^\ast$ model that correctly predicts the outcome any Pauli measurement performed on a graph state. However, they also show that by ignoring certain vertices output they abrogate any $d$-LHV$^\ast$ description. This motivates the following definition. 
%
\begin{definition}[Submeasurement]
Given a Pauli measurement $M = \bigotimes_{v\in V} \sigma_v$ on the graph state, a submeasurement of $M$ is defined by a Pauli product $C = \bigotimes_{v\in V} \sigma^\prime_v$ such that $\sigma^\prime_v = \sigma_v$ or $\sigma^\prime_v = \mathds{1}_v$ for all $v \in V$.
\end{definition}
When considering the correlations, all the vertices where the submeasurement's operator differs from the Pauli measurement's are those whose outcomes are disregarded.

\section{Non-locality in Graph States}\label{sec:non}

In this section we will review non-locality of graph states, and present an alternative approach to that of \cite{guhne2005bell}, which we then extend to show non-locality persists when limited classical communication is allowed between vertices.

Graph states exhibit non-local properties when considering a vertex as a local unit. That is, they contradict the prediction of a measurement's outcome made by a local hidden variable (LHV) model \cite{guhne2005bell}. 
We formulate LHV models as follows.
We denote a hidden variable $\lambda$ with a probability distribution $\rho(\lambda)$. 
To each $\lambda$, the LHV assigns deterministic variables $h(O,\lambda)$ to local observables $O$. 
A measurement's outcome is then a probabilistic mixture of the deterministic variable over the distribution of the hidden variable $\lambda$. As the variables' values are the possible observables' outcome, it is $h(\sigma_v,\lambda)= \pm 1$ for Pauli operators $\sigma_v = X_v,Y_v,Z_v$ and $h(\mathds{1}_v,\lambda)= 1$. Given a Pauli measurement $M = \bigotimes_{v\in V} \sigma_v$, the LHV model predicts the outcome as $\langle M \rangle_{\mathrm{LHV}} = \int \rho \left(\lambda \right) \prod_{v \in V} h \left(\sigma_u, \lambda \right) \, \df \lambda$. In order to show a contradiction with any LHV model, it suffices to show that all arbitrary choices of deterministic variables $h$ sustain the contradiction, thus a probabilistic mixture does as well. We can, therefore, drop the notion of a hidden variable $\lambda$ but consider all possible deterministic variables' outcomes instead.

We briefly outline the arguments for non-locality of graph states, which will be the starting point of our constructions. A set of Pauli measurements $\lbrace M_k \rbrace_k$ on a fixed graph state contradicts any LHV model if it fulfills the three properties. First,
\begin{equation}
    \vert \lbrace k; (M_k)_v = \sigma_v \rbrace \vert \bmod 2  =0 \,, \forall \sigma_v \neq \mathds{1}_v \,,\label{ali:cond-3}
\end{equation}
for all $v \in V$ and the set cardinality $|\cdot|$, which states that all non-trivial Pauli operators must occur an even number of times on very vertex. Second and third,
\begin{align} \phantom{\prod_{k}}M_k &= \chi^{\phantom{\prime}}_k S_k \,, \label{ali:cond-1} \\\prod_{k} M_k &= -\mathds{1}^{\otimes^V} \,, \label{ali:cond-2} \end{align} 
with a stabiliser elements $S_k$ and sign factors $\chi^{\phantom{\prime}}_k = \pm 1$.
Such a set exists for all graph states with at least three connected vertices, as shown in \cite{guhne2005bell}.

Equation \eqref{ali:cond-3} states that on every vertex, the number of every non-trivial Pauli operators the set of Pauli measurements $\lbrace M_k \rbrace_k$ is even. 
For any LHV this implies \begin{align*}
    \prod_{k} \langle M_k \rangle_{\mathrm{LHV}} 
    &= \prod_{k} \int \rho(\lambda)\prod_{v\in V} h((M_k)_v,\lambda) \,\df\lambda
    \\&= \int \rho(\lambda) \prod_{k^\prime} \prod_{v\in V} h^2((M_{k^\prime})_v,\lambda)  \,\df\lambda
    \\&= \int \rho(\lambda) \prod_{v\in V} (\pm 1)^2 \,\df\lambda = 1\,,
\end{align*} where we used $h((M_k)_v=\mathds{1}_v,\lambda)=1$.
On the other hand, Eq.~\eqref{ali:cond-1} implies that quantum mechanics predicts every measurement's outcome to be deterministically $\chi^{\phantom{\prime}}_k = \pm 1$. Furthermore, from Eq.~\eqref{ali:cond-2} the overall product of measurements' results is then deterministically $\prod_{k} \langle M_k \rangle_{\mathrm{QM}} = -1$. The contradiction implies that no LHV model can reproduce the results predicted by quantum mechanics.\\


We now move to consider classical models where distance-bounded communication is permitted.

\begin{definition}
    Given a graph $G$ and non-negative integer $d$, a graph's vertex serves as a local unit and distance is measured along the edges. Then, a communication-assisted LHV model $d$-LHV$^\ast$ is a hidden variable model where each hidden variable depends on a local input and inputs up to distance $d$, which accounts for vertices broadcasting information about their inputs.
\end{definition}

We define slightly adapted conditions on the measurements, which will similarly lead to contradictions. Considering Pauli measurements, $d$-LHV$^\ast$ model specifies randomly predetermined variables conditioned on the Pauli operators local to a vertex and its neighbours up to distance $d$ edges apart. For every vertex $u$ and $M_k$, we define an excerpt of local measurements around the vertex $v$ with maximum distance $d$ along the edges
\begin{equation}
   (\mathbf{M}_k)^d_v := \bigotimes_{\substack{u\\\lvert u- v \rvert\leq d}}(M_k)_u \,,
\end{equation}
which is a tensor product of Pauli operators $\sigma^d_v = \bigotimes_{u,\,\vert u- v \vert \leq d} \sigma_u$ around the vertex $v$.

In order to obtain a contradiction between the predictions of quantum mechanics and those allowed by such a $d$-LHV$^\ast$, we consider now pairs of measurements $(M_k, C_k)$, where the $M_k$ will be the measurements that the vertices are asked to perform, but the $C_k$, which are submeasurements of the $M_k$, are the measurements that lead to contradictions similar to what we have above. Recall that the submeasurement's Pauli operators are 
either the identity, or identical to the Pauli measurement. 
The role of the $M_k$ is then to hide from any $d$-LHV$^\ast$ information that identifies fully the correlations being checked.

For a communication-assisted version of Eq.\ \eqref{ali:cond-3}, the number of Pauli products $(\mathbf{M}_k)^d_v$ must be even in the set of measurements conditioned on $(C_k)_v \neq \mathds{1}_v$, that is
\begin{equation}
    \vert \lbrace k\,; (\mathbf{M}_k)^d_v = \mathbf{\sigma}^d_v \wedge (C_{k})_v \neq \mathds{1}_v \rbrace  \vert  \bmod 2 =0 \,,~\forall \sigma^d_v\,, \label{eq:infcond-3}
\end{equation}
for all vertices $u \in V$.
Note that for $d=0$ the above condition is equivalent to Eq.~\eqref{ali:cond-3} for a given vertex. Again, for $d>0$, Eq.\,\eqref{eq:infcond-3} requires not only that every non-trivial Pauli operator occurs in pairs in the set of submeasurements but that, at the same time, the corresponding Pauli measurements are equal on the neighbour vertices up to distance $d$. 

We will now see contradictions between any $d$-LHV$^\ast$ model and sets of measurements pairs $(M_k, C_k)$ that satisfies \eqref{eq:infcond-3} for all vertices, and
\begin{align} \phantom{\prod_{k}}C_k &= \chi^{\phantom{\prime}}_k S_k \,, \label{ali:infcond-1} \\\prod_{k} C_k &= -\mathds{1}^{\otimes^V} \,. \label{ali:infcond-2}\end{align}

\begin{lemma} \label{lem:comp} If a graph state has sets of measurements and submeasurement pairs $\lbrace(M_k, C_k)\rbrace_k$, such that the $\lbrace C_k\rbrace_k $ satisfy \eqref{ali:infcond-1},\eqref{ali:infcond-2} and the pairs satisfy Eq.\,\eqref{eq:infcond-3} for all vertices, then their statistics cannot be mimicked by any $d$-LHV$^\ast$ model. \end{lemma}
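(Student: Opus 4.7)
\emph{Proof plan.} The plan is to mirror the LHV argument sketched earlier in the section, but applied to the submeasurement outcomes $C_k$ and with \eqref{eq:infcond-3} playing the role of \eqref{ali:cond-3}. Concretely, I would compute the product $\prod_k \langle C_k \rangle$ under quantum mechanics and under an arbitrary $d$-LHV$^\ast$, and exhibit a $-1$ versus $+1$ discrepancy.

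On the quantum side, although each round runs the full Pauli measurement $M_k$, the value associated to the submeasurement $C_k$ is unambiguously the product of the $\pm 1$ outcomes at the vertices where $(C_k)_v \neq \mathds{1}_v$, because at those vertices $(C_k)_v = (M_k)_v$ by the definition of a submeasurement. From \eqref{ali:infcond-1} each $C_k$ equals $\chi_k S_k$ with $S_k \in \mathcal{S}$, hence $\langle C_k \rangle_{\mathrm{QM}} = \chi_k$; combined with \eqref{ali:infcond-2} this yields $\prod_k \langle C_k \rangle_{\mathrm{QM}} = -1$.

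On the $d$-LHV$^\ast$ side, following the earlier LHV reasoning I absorb $\lambda$ and argue pointwise. A $d$-LHV$^\ast$ assigns at vertex $v$ in round $k$ a deterministic outcome $h_v((\mathbf{M}_k)^d_v,\lambda) \in \{\pm 1\}$ depending only on the input pattern within distance $d$ of $v$. The value given to $C_k$ is then $\prod_{v\,:\,(C_k)_v\neq\mathds{1}_v} h_v((\mathbf{M}_k)^d_v,\lambda)$. Reordering the product over $k$ (all factors commute) and grouping contributions at each vertex $v$ by the value $\sigma^d_v = (\mathbf{M}_k)^d_v$ gives
\begin{equation*}
\prod_k \prod_{v\,:\,(C_k)_v\neq\mathds{1}_v} h_v((\mathbf{M}_k)^d_v,\lambda) \;=\; \prod_v \prod_{\sigma^d_v} h_v(\sigma^d_v,\lambda)^{N_v(\sigma^d_v)},
\end{equation*}
with $N_v(\sigma^d_v) = |\{\, k : (\mathbf{M}_k)^d_v = \sigma^d_v \wedge (C_k)_v \neq \mathds{1}_v \,\}|$. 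Condition \eqref{eq:infcond-3} forces every $N_v(\sigma^d_v)$ to be even, so the right-hand side equals $+1$ for every $\lambda$; integrating against $\rho(\lambda)$ yields $\prod_k \langle C_k \rangle_{d\text{-LHV}^\ast} = +1$, contradicting the quantum value $-1$.

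The main thing to watch, rather than a real obstacle, is the bookkeeping in the reordering step: I need the restriction of the product over $v$ to those with $(C_k)_v \neq \mathds{1}_v$ to match exactly the event counted by \eqref{eq:infcond-3}, and the deterministic outputs must depend only on the $d$-neighbourhood input pattern $(\mathbf{M}_k)^d_v$ so that equal patterns truly cancel pairwise. Both facts are immediate from the definitions of submeasurement and of $d$-LHV$^\ast$, so once the setup is in place the proof is a one-line extension of the plain LHV argument recalled earlier.
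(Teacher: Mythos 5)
Your proposal is correct and follows essentially the same route as the paper's proof: quantum mechanics gives $\prod_k\langle C_k\rangle_{\mathrm{QM}}=-1$ via \eqref{ali:infcond-1}--\eqref{ali:infcond-2}, while on the $d$-LHV$^\ast$ side the product is reordered vertex by vertex and \eqref{eq:infcond-3} forces every factor $h_v(\sigma^d_v,\lambda)$ to appear an even number of times, yielding $+1$. Your explicit parity count $N_v(\sigma^d_v)$ and the restriction to vertices with $(C_k)_v\neq\mathds{1}_v$ are just a slightly more formal rendering of the paper's convention that $h$ evaluates to $1$ on identity operators.
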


\begin{proof}
When one performs Pauli measurements on a graph state, any $d$-LHV$^\ast$ model randomly assigns predetermined variables $h$ for each vertex depending on the Pauli operators on the vertex and its neighbours up to distance $d$. In the submeasurements, we discard certain local outcomes. 
If they fulfill Eq.~\eqref{eq:infcond-3} on every vertex, 
all variables conditioned on different Pauli operators occur an even number of times among the set of measurements and $h((C_k)_u,=\mathds{1}_u \lbrace (M_k)_v \rbrace_{0<\vert u-v \vert \leq d},\lambda)=1$. For any $d$-LHV$^\ast$ model we evaluate the product of all submeasurements' outcomes
\begin{align*}
    &\phantom{=}\prod_{k} \langle C_k \rangle_{d-\mathrm{LHV}^\ast}
    \\ &= \prod_{k} \int \rho(\lambda) \prod_{u\in V}  h((C_k)_u, \lbrace (M_k)_v \rbrace_{0<\vert u-v \vert \leq d},\lambda) \,\df \lambda \allowbreak
    \\ &= \int \rho(\lambda) \prod_{u\in V} \prod_{k} h((\mathbf{M_k})^d_u,\lambda)\, \df \lambda
    \\ &= \int \rho(\lambda) \, \df \lambda = 1 \,.
\end{align*}
Contrarily, Eq.\ \eqref{ali:infcond-1} implies that quantum mechanics predicts every submeasurement's outcome to be deterministically $\chi^{\phantom{\prime}}_k = \pm 1$ . Furthermore, from Eq.~\eqref{ali:infcond-2} the overall product of submeasurements' results is then deterministically $\prod_{k} \langle C_k \rangle_{\mathrm{QM}} = -1$. The contradiction implies that no $d$-LHV$^\ast$ model can reproduce the results predicted by quantum mechanics. \end{proof}

In Tab.\,\ref{tab:ghz}, we see how the cycle graph in Fig.\,\ref{fig:triangle9}, viewed as a triangle with GHSZ like correlations at the corners gives such a contradiction, which is just a different presentation of the results of \cite{barrett2007modeling}. In the following we will extend this idea to arbitrary graphs.
\begin{figure}
    \centering
    \begin{tikzpicture}
\begin{scope}[every node/.style={circle,thick,draw}]

    \node[minimum size=1cm] (1) at (2,3.5) {$1$};
    \node[minimum size=1cm] (2) at (0,0) {$2$};
    \node[minimum size=1cm] (3) at (4,0) {$3$};

    \node (121) at (0.7,1.185) {$2_{(1,2)}$};
    \node (122) at (1.3,2.315) {$1_{(1,2)}$};

    \node (131) at (2.7,2.315) {$2_{(3,1)}$};
    \node (132) at (3.3,1.185) {$1_{(3,1)}$};
    
    \node (231) at (1.3,0) {$1_{(2,3)}$};
    \node (232) at (2.7,0) {$2_{(2,3)}$};

    \draw[-] (2) -- (121); 
    \draw[-] (121) -- (122);
    \draw[-] (122) -- (1);

    \draw[-] (1) -- (131);
    \draw[-] (131) -- (132);
    \draw[-] (132) -- (3);

    \draw[-] (2) -- (231);
    \draw[-] (231) -- (232);
    \draw[-] (232) -- (3);

    \end{scope}
    \end{tikzpicture}

    \caption{Circular graph seen as a $(d=1)$-inflated triangle graph whose graph state is locally equivalent to the $3$-GHZ state. The nodes contain the vertices' label. We provide a set of Pauli measurements and submeasurements in Tab.\,\ref{tab:ghz} that no $(d=1)$-LHV$^\ast$ model can correctly predict.}
    \label{fig:triangle9}
\end{figure}
\begin{table}
\centering
\begin{tabular}{clllllllllll}
$M^\prime_1=$ &  $X_1$ & \textcolor{gray}{$X_{1_{(1,2)}}$} & \hspace{-0.1cm}$X^{\prime}_{2_{(1,2)}}$ & $Z_2$ & \textcolor{gray}{$X_{1_{(2,3)}}$} & \hspace{-0.1cm}\textcolor{gray}{$X_{2_{(2,3)}}$} & $Z_3$ & $X^{\prime}_{1_{(3,1)}}$ & \hspace{-0.1cm}\textcolor{gray}{$X_{2_{(3,1)}}$}  \\
$M^\prime_2=$ & $Z_1$ & $X^{\prime}_{1_{(1,2)}}$ & \hspace{-0.1cm}\textcolor{gray}{$X_{2_{(1,2)}}$} & $X_2$ & \textcolor{gray}{$X_{1_{(2,3)}}$} & \hspace{-0.1cm}$X^\prime_{2_{(2,3)}}$ & $Z_3$ & \textcolor{gray}{$X_{1_{(3,1)}}$} & \hspace{-0.1cm}\textcolor{gray}{$X_{2_{(3,1)}}$}  \\
$M^\prime_3=$ & $Z_1$ & \textcolor{gray}{$X_{1_{(1,2)}}$} & \hspace{-0.1cm}\textcolor{gray}{$X_{2_{(1,2)}}$} & $Z_2$ & $X^{\prime}_{1_{(2,3)}}$ & \hspace{-0.1cm}\textcolor{gray}{$X_{2_{(2,3)}}$} & $X_3$ & \textcolor{gray}{$X_{1_{(3,1)}}$} & \hspace{-0.1cm}$X^{\prime}_{2_{(3,1)}}$  \\
$M^\prime_4=$ & $X_1$ & $X_{1_{(1,2)}}$ & \hspace{-0.1cm}$X_{2_{(1,2)}}$ & $X_2$ & $X_{1_{(2,3)}}$ & \hspace{-0.1cm}$X_{2_{(2,3)}}$ & $X_3$ & $X_{1_{(3,1)}}$ & \hspace{-0.1cm}$X_{2_{(3,1)}}$  \\
$\tilde{M}^\prime_1=$ & \textcolor{gray}{$X_1$} & $X_{1_{(1,2)}}$ & \hspace{-0.1cm}\textcolor{gray}{$X_{2_{(1,2)}}$} & $Y_2$ & $X^{\prime \prime}_{1_{(2,3)}}$ & \hspace{-0.1cm}$X^{\prime \prime}_{2_{(2,3)}}$ & $Y_3$ & \textcolor{gray}{$X_{1_{(3,1)}}$} & \hspace{-0.1cm}$X_{2_{(3,1)}}$ \\
$\tilde{M}^\prime_2=$ & \textcolor{gray}{$Z_1$} & $X^{\prime}_{1_{(1,2)}}$ & \hspace{-0.1cm}\textcolor{gray}{$X_{2_{(1,2)}}$} & $Y_2$ & $X^{\prime \prime}_{1_{(2,3)}}$ & \hspace{-0.1cm}$X^{\prime \prime}_{2_{(2,3)}}$ & $Y_3$ & \textcolor{gray}{$X_{1_{(3,1)}}$} & \hspace{-0.1cm}$X^\prime_{2_{(3,1)}}$ \\
$\tilde{M}^\prime_3=$ & $Y_1$ & \textcolor{gray}{$X_{1_{(1,2)}}$} & \hspace{-0.1cm}$X_{2_{(1,2)}}$ & \textcolor{gray}{$X_2$} & $X_{1_{(2,3)}}$ & \hspace{-0.1cm}\textcolor{gray}{$X_{2_{(2,3)}}$} & $Y_3$ & $X^{\prime \prime}_{1_{(3,1)}}$ & \hspace{-0.1cm}$X^{\prime \prime}_{2_{(3,1)}}$ \\
$\tilde{M}^\prime_4=$ & $Y_1$ & \textcolor{gray}{$X_{1_{(1,2)}}$} & \hspace{-0.1cm}$X^{\prime}_{2_{(1,2)}}$ & \textcolor{gray}{$Z_2$} & $X^\prime_{1_{(2,3)}}$ & \hspace{-0.1cm}\textcolor{gray}{$X_{2_{(2,3)}}$} & $Y_3$ & $X^{\prime \prime}_{1_{(3,1)}}$ & \hspace{-0.1cm}$X^{\prime \prime}_{2_{(3,1)}}$ \\
$\tilde{M}^\prime_5=$ & $Y_1$ & $X^{\prime \prime}_{1_{(1,2)}}$ & \hspace{-0.1cm}$X^{\prime \prime}_{2_{(1,2)}}$ & $Y_2$ & \textcolor{gray}{$X_{1_{(2,3)}}$} & \hspace{-0.1cm}$X_{2_{(2,3)}}$ & \textcolor{gray}{$X_3$} & $X_{1_{(3,1)}}$ & \hspace{-0.1cm}\textcolor{gray}{$X_{2_{(3,1)}}$} \\
$\tilde{M}^\prime_6=$ & $Y_1$ & $X^{\prime \prime}_{1_{(1,2)}}$ & \hspace{-0.1cm}$X^{\prime \prime}_{2_{(1,2)}}$ & $Y_2$ & \textcolor{gray}{$X_{1_{(2,3)}}$} & \hspace{-0.1cm}$X^{\prime}_{2_{(2,3)}}$ & \textcolor{gray}{$Z_3$} & $X^\prime_{1_{(3,1)}}$ & \hspace{-0.1cm}\textcolor{gray}{$X_{2_{(3,1)}}$} 
\end{tabular}

\caption{Set of Pauli measurements (black and grey text) and with submeasurements (grey operators are treated as $\mathds{1}$) on the graph in Fig.\,\ref{fig:triangle9} that contradicts any $d=1$-LHV$^\ast$ model. In order to account for a round of classical communication among nearest neighbors, the default is that the nearest neighbours are measuring Pauli $X$ and we prime an operator when a nearest neighbor measures Pauli $Z$ and double prime when Pauli $Y$. In the first four measurements, the vertices $1,2,3$ mimic a GHSZ-like paradox on a triangular graph state. The remaining measurements are hiding the classical information acquired by the other vertices from the main three vertices. Altogether the submeasurements fulfill Eq.\,\eqref{ali:infcond-1} and Eq.\,\eqref{ali:infcond-2}, and the pairs fulfill Eq.\,\eqref{eq:infcond-3} on all vertices.}
\label{tab:ghz}
\end{table}

\section{Inflated Graph States} \label{sec: inflated graph states}
In this section we define a method to generate examples from any graph states, which we call inflated graph states, and the measurements that will be used in our construction.

For any graph with at least three connected vertices we construct an inflated graph by substituting every edge by chain of $2d$ vertices bridging the edge. Starting from a graph $G = (V,E)$, the inflated graph $G^\prime = (V^\prime, E^\prime)$ has the same vertices as the old graph, completed by $2d$ vertices $r_{(u,v)}$ for every edge $(u,v) \in E$ and $r \in [1,2d]$. The new edges are $(u,1_{(u,v)}), ((2d)_{(u,v)},v) \in E^\prime $ and $(r_{(u,v)},(r+1)_{(u,v)}) \in E^\prime$ for all $r \in [1,2d-1]$ and $(u,v)\in E$. Note that even though the notation of the edges is symmetric ($(u,v)=(v,u)$), we break it by counting the vertices starting from the first vertex to the second as above. We refer to the original vertices as \textit{power} vertices and to the newly added vertices as \textit{chain} vertices. We denote Pauli measurements and stabiliser elements with the primed letter used for the original graph state. Figure \ref{fig:inflated1} shows an exemplary inflated graph.
\begin{definition}[Inflated Measurements] \label{def:powerm} Given a Pauli measurement $M$ on the original graph state, the \textit{Inflated} measurement $M^\prime$ is a Pauli measurement on the inflated graph state. On the power vertices $u$ it measures $(M^\prime)_u = (M)_u$, and $(M^\prime)_{w} = X_{w}$ on all chain vertices $w =r_{(u,v)}$ with $r \in [1,2d]$ and $(u,v) \in E$.\end{definition}
\begin{definition}[Inflated generator element]
The inflated generator element $f_u$ is a stabiliser element of the inflated graph state with
\begin{equation}
f_u = g^\prime_u  
\prod_{\substack{(u,v)\in E,\\s=1,\dots,d}} 
g^\prime_{(2s)_{(u,v)}} \,.\label{eq:infvstab}
\end{equation}
\label{def:infvstab}
\end{definition}
Writing the inflated generator elements in terms of Pauli operators $f_u = X_u  \bigotimes_{v,(u,v)\in E} Z_v \,\bigotimes^d_{s=1} X_{(2s)_{(u,v)}}$, we see that they mimic the generator elements of the original graph state $g_v = X_v \bigotimes_{(u,v) \in E} Z_u$ on the power vertices and measure $X$ on every second chain vertex.
\begin{definition}[Inflated Stabiliser Element]\label{def:powers} Given a stabilizer element $S = \prod_{u \in U} g_u$ with $U \subset V$ on the original graph state, we define the \textit{Inflated Stabiliser} element 
\begin{equation}
    S^\prime = \prod_{u \in U} f_u \,,
    \label{eq:inheritstab}
\end{equation}
on the inflated graph state with the inflated generator elements $f_u$ from Def.\,\ref{def:infvstab}.
\end{definition}

\begin{definition}[Shell Stabiliser Element] \label{def:shell} 
For a vertex $u \in V$, with at least two nearest neighbours $v_1,v_2 \in V$, we define the Shell stabiliser element \begin{equation}
    \tilde{S} = f_{v_1} f_{v_2}
\end{equation} 
on the inflated graph state with the inflated generator elements $f_u$ from Def.\,\ref{def:infvstab}.
\end{definition}

\begin{definition}[Decoy Measurements] \label{def:decoy}
For a vertex $u \in V$ with two nearest neighbours $v_1,v_2 \in V$, and two different Pauli operators $\sigma^{\phantom{\prime}}_u \neq \sigma^{\prime}_u$, we define a pair of Pauli measurements $(\tilde{M}_1,\tilde{M_2})$ called \textit{decoy measurements} on the inflated graph state. On the power vertex $u$ they measure \[(\tilde{M}_1)_u = \sigma^{\phantom{\prime}}_u \,,\,(\tilde{M}_2)_u = \sigma^{\prime}_u \,. \] 
On the chain vertices $w$ linking $u$ to $v_1$ and $v_2$,
\[(\tilde{M}_1)_{w} = (\tilde{M}_2)_{w} = X_{w} \,, \] 
with $w = r_{(u,v_{1(2)})}$ for all $r\in [1,2d]$, and \[(\tilde{M}_1)_v = (\tilde{M}_2)_v = (\tilde{S})_v\,,\] on all other vertices $v \notin \{u\}\cup\{ r_{(u,v)}\}_{v\in\{v_1,v_2\},r \in [1,2d]}$ of the inflated graph for the Shell stabiliser element $\tilde{S}$ defined by the same vertices $u,v_1,v_2$.
\end{definition}

\begin{lemma} Given the same vertices $u,v_1,v_2$, the Shell stabiliser element $\tilde{C} = \tilde{S}$ from Def.\,\ref{def:shell} is a submeasurement of both decoy measurements $(\tilde{M}_1,\tilde{M_2})$ from Def.\,\ref{def:decoy}.\end{lemma}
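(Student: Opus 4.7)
The plan is to verify the submeasurement condition vertex by vertex, i.e., to show that at every $v \in V^\prime$, the local operator $(\tilde{S})_v$ equals either $(\tilde{M}_i)_v$ or $\mathds{1}_v$ for each $i \in \{1,2\}$. I will partition $V^\prime$ into three classes mirroring the case-split in Definition~\ref{def:decoy}: (a) the power vertex $u$, (b) the interior chain vertices on the two chains $(u,v_1)$ and $(u,v_2)$, and (c) all remaining vertices of $G^\prime$.

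Class (c) is immediate from Definition~\ref{def:decoy} itself: the decoy measurements are defined so that $(\tilde{M}_1)_v = (\tilde{M}_2)_v = (\tilde{S})_v$ on every such vertex, hence $\tilde{S}$ trivially agrees with both decoys outside $u$ and the two linking chains.

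For class (a), I will expand $\tilde{S} = f_{v_1} f_{v_2}$ using the explicit Pauli form of the inflated generator $f_w = X_w \bigotimes_{w^\prime : (w,w^\prime) \in E} Z_{w^\prime} \bigotimes_{s=1}^d X_{(2s)_{(w,w^\prime)}}$ given after Definition~\ref{def:infvstab}. Since $u$ is a neighbour of both $v_1$ and $v_2$ in the original graph $G$, the factor $Z_u$ appears once in $f_{v_1}$ and once in $f_{v_2}$, so in the product they cancel and $(\tilde{S})_u = \mathds{1}_u$, which satisfies the submeasurement condition regardless of what $(\tilde{M}_i)_u$ is.

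Class (b) is the step that requires a little more care, and where I expect any subtlety to lie. Because $G$ is a simple graph, the chains $(u,v_1)$ and $(u,v_2)$ are internally disjoint from each other and from every other chain in $G^\prime$ (their only shared vertices are the power endpoints). Consequently, on the interior of the chain $(u,v_1)$ only $f_{v_1}$ acts, and similarly only $f_{v_2}$ acts on the interior of $(u,v_2)$. Reading the explicit formula for $f_{v_1}$ from $v_1$'s end of the chain, one finds $X$ on every second chain vertex and $\mathds{1}$ on the remainder, so $(\tilde{S})_w \in \{X_w, \mathds{1}_w\}$ at every chain vertex $w$. Since $(\tilde{M}_i)_w = X_w$ on those same vertices, the submeasurement condition $\{(\tilde{M}_i)_w, \mathds{1}_w\}$ is met. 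The main obstacle is keeping the chain-indexing convention $(2s)_{(u,v)}$ consistent when viewed from both endpoints and thereby confirming that the $X$-positions chosen by $f_{v_1}$ along the chain $(u,v_1)$ really are a subset of the chain vertices — which is guaranteed by construction — so that no stray non-$X$ Pauli survives on the chain interior.
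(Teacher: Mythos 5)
Your case split and local-operator analysis coincide in substance with the paper's own proof: it likewise checks that $\tilde S=f_{v_1}f_{v_2}$ reduces to $\mathds 1_u$ on the power vertex $u$ (two cancelling $Z$'s), to $X$ or $\mathds 1$ on the two linking chains, and agrees with the decoys by construction everywhere else. The only methodological difference is that the paper re-derives the local Paulis from the generator decomposition via the $\delta_u,t_u$ rules of Eq.~\eqref{eq:rules}, whereas you read them off the explicit expansion $f_v=X_v\bigotimes_{(v,w)\in E}Z_w\bigotimes_{s}X_{(2s)_{(v,w)}}$ stated after Def.~\ref{def:infvstab}; the two are equivalent and yours is, if anything, more direct.

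There is one omission: you never verify the overall sign of $\tilde S$. A submeasurement is by definition a Pauli product with coefficient $+1$, and the product of two positive Pauli products such as $f_{v_1}f_{v_2}$ can in general acquire a global $-1$ from sites where their supports overlap non-trivially, so this must be checked for $\tilde S$ itself (rather than $-\tilde S$) to be a submeasurement, and it matters later when the Shell element enters the product condition \eqref{ali:infcond-2} with $\chi=+1$. The check is short: the overlaps are $u$, where $Z_uZ_u=\mathds 1_u$ gives no phase; any common power-vertex neighbour $w\neq u$ of $v_1$ and $v_2$, where again $Z_wZ_w=\mathds 1_w$; and, when $(v_1,v_2)\in E$, the sites $v_1$ and $v_2$ themselves, where $X_{v_1}Z_{v_1}=-\mathrm iY_{v_1}$ and $Z_{v_2}X_{v_2}=+\mathrm iY_{v_2}$ produce phases that cancel (the chain between $v_1$ and $v_2$ contributes no phase since the two generators place their $X$'s on disjoint, interleaved positions). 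Hence the sign is $+1$, as the paper records in the last sentence of its proof.
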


\begin{proof} We determine the local Pauli operators and the sign factor of the Shell stabiliser element from Def.~\ref{def:shell} using the rules in Eq.~\eqref{eq:rules}. The power vertex $u$ does not contribute a generator element but two nearest neighbour chain vertices do, therefore, the $\tilde{C}$ has Pauli operator $\mathds{1}_u$ on power vertex $u$. The same holds for the chain vertices $r_{(u,v_1)}$ and $r_{(u,v_2)}$ for $r=1,3,\dots,2d-1$. The chain vertices $r_{(u,v_1)}$ and $r_{(u,v_2)}$ for $r=2,4,\dots,2d$ contribute a generator element but no nearest neighbours do, therefore the Pauli operators in $\tilde{C}$ is $X$. By definition, all other vertices have congruent Pauli operators. The overall sign factor is $+1$.\end{proof}

\begin{figure}
    \centering
    \begin{tikzpicture}
\begin{scope}[every node/.style={circle,thick,draw}]

    \node[line width=0.75mm,label={[label distance=0.05cm]90:\small $2$}] (1) at (1.5,6) {$\phantom{X}$};
    \node (121) at (2.25,6) {};
    \node (122) at (2.75,6) {};
    \node (123) at (3.25,6) {};
    \node (124) at (3.75,6) {};
    \node[line width=0.75mm, label={[label distance=0.05cm]90:\small $3$}] (2) at (4.5,6) {$\phantom{X}$};

    \node[line width=0.75mm] (241) at (4.125,5.25) {};
    \node (242) at (3.875,4.75) {};
    \node[line width=0.75mm] (243) at (3.625,4.25) {};
    \node (244) at (3.375,3.75) {};

    \node (131) at (1.125,5.25) {};
    \node (132) at (0.875,4.75) {};
    \node (133) at (0.625,4.25) {};
    \node (134) at (0.375,3.75) {};
    
    \node[label={[label distance=0.05cm]180:\small $1$}] (3) at (0,3) {$\phantom{X}$};

    \node[line width=0.75mm] (141) at (1.875,5.25) {};
    \node (142) at (2.125,4.75) {};
    \node[line width=0.75mm] (143) at (2.375,4.25) {};
    \node (144) at (2.625,3.75) {};

    \node[line width=0.75mm,label={[label distance=0.05cm]180:\small $4$}] (4) at (3,3) {$\phantom{X}$};

    \node (451) at (3.75,3) {};
    \node[line width=0.75mm] (452) at (4.25,3) {};
    \node (453) at (4.75,3) {};
    \node[line width=0.75mm] (454) at (5.25,3) {};

    \node[line width=0.75mm, label={[label distance=0.05cm]90:\small $5$}] (5) at (6,3) {$\phantom{X}$};

    \draw[-] (1) -- (121); 
    \draw[-] (121) -- (122);
    \draw[-] (122) -- (123);
    \draw[-] (123) -- (124);
    \draw[-] (124) -- (2);
    
    \draw[-] (2) -- (241); 
    \draw[-] (241) -- (242);
    \draw[-] (242) -- (243);
    \draw[-] (243) -- (244);
    \draw[-] (244) -- (4);

    \draw[-] (1) -- (131);
    \draw[-] (131) -- (132);
    \draw[-] (132) -- (133);
    \draw[-] (133) -- (134);
    \draw[-] (134) -- (3);

    \draw[-] (1) -- (141);
    \draw[-] (141) -- (142);
    \draw[-] (142) -- (143);
    \draw[-] (143) -- (144);
    \draw[-] (144) -- (4);

    \draw[-] (4) -- (451);
    \draw[-] (451) -- (452);
    \draw[-] (452) -- (453);
    \draw[-] (453) -- (454);
    \draw[-] (454) -- (5);

    \end{scope}
    \end{tikzpicture}

    \caption{Example of an inflated graph for $d=2$ from a graph with $5$ (large) vertices. The fat nodes highlight the vertices that measure a non-trivial Pauli operator when measuring the inflated generator element $f_4$ from Def.\,\ref{def:infvstab}. Power vertex $4$ and all highlighted chain vertices measure Pauli $X$ while the power vertices $2,3,5$ measure Pauli $Z$.}
    \label{fig:inflated1}
\end{figure}
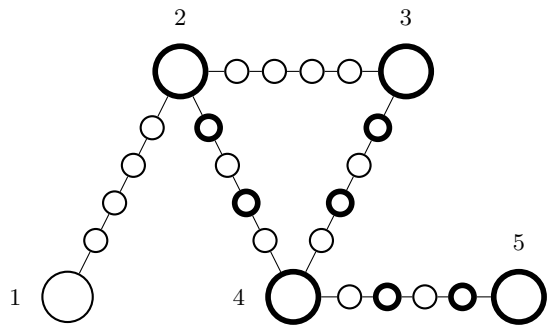

\section{Measurements inconsistent with $d$-LHV*-models}
\label{sec: main result}

Given our definitions, we can now move to finding the set of measurement which will allow us to apply lemma \ref{lem:comp}.
Here, we construct a set of Pauli measurements on the inflated graph state from the Pauli measurements on the original graph state. Then, we show that submeasurements of the new Pauli measurements fulfill Eqs.~\eqref{eq:infcond-3}, \eqref{ali:infcond-1}, \eqref{ali:infcond-2} 
on all vertices, i.e.\ they are proportional to stabiliser elements on the inflated graph state, their overall product equals $-\mathds{1}$ and all Pauli operators in the set of submeasurements, together with the Pauli operators from the Pauli measurements on the neighbours up to distance $d$, occur in even pairs. Thus, the new Pauli measurements exhibit correlations that no distance-$d$ communication-assisted LHV model can predict.

Given a set of Pauli measurements $(M_1,\dots,M_N)$ that fulfill Eqs.~\eqref{ali:cond-3}--\eqref{ali:cond-2} on the original graph state, we construct a set of Inflated measurements $(M^\prime_1,\allowbreak\dots,\allowbreak M^\prime_N)$ according to Def.~\ref{def:powerm}. Furthermore, we define Inflated stabiliser elements $(S^\prime_1,\dots,S^\prime_N)$ following Def.~\ref{def:powers} for the stabiliser elements $(S_1 = \chi^{\phantom{\prime}}_1 M_1,\dots,S_N = \chi^{\phantom{\prime}}_N M_N)$ on the original graph state from Eq.~\eqref{ali:cond-1}.

\begin{lemma} \label{lem:powers} Starting from a Pauli measurement $M_k=\chi^{\phantom{\prime}}_k S_k$ on the original graph state, given the respective Inflated measurement $M^\prime_k$ from Def.~\ref{def:powerm} and Inflated stabiliser element $S^\prime_k$ from Def.~\ref{def:powers}, then $C^\prime_k = \chi^\prime_k S^\prime_k$ is a submeasurement of the Inflated measurement $M_k^\prime$ and $\chi^\prime_k=\chi^{\phantom{\prime}}_k$. \end{lemma}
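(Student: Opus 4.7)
The strategy is purely computational: apply the map in Eq.~\eqref{eq:rules} to $S'_k=\prod_{u\in U_k} f_u$ vertex-by-vertex on the inflated graph, and read off the local Pauli operator and sign on each power vertex and each chain vertex. First I would make the underlying index set explicit. Expanding each $f_u$ via Def.~\ref{def:infvstab} yields the decomposition $S'_k = \prod_{u'\in U'_k} g'_{u'}$ with
\[
U'_k = U_k \;\cup\; \bigl\{(2s)_{(u,v)} : u\in U_k,\ (u,v)\in E,\ s=1,\dots,d\bigr\},
\]
so I set $\delta'_{u'}=\mathds{1}[u'\in U'_k]$ and compute $t'_{u'}=\sum_{(u',w)\in E'}\delta'_{w}\bmod 4$ case by case.

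For a power vertex $u$, its inflated-graph neighbours are exactly the chain vertices $1_{(u,v)}$ for each $(u,v)\in E$. Using the relabelling identity $1_{(u,v)}=(2d)_{(v,u)}$, the vertex $1_{(u,v)}$ lies in $U'_k$ iff $v\in U_k$. Hence $\delta'_u=\delta_u$ and $t'_u=\sum_{(u,v)\in E}\delta_v=t_u$, so Eq.~\eqref{eq:rules} reproduces both the local Pauli and local sign that $S_k$ would place on $u$ in the original graph. Since $S_k=\chi_k M_k$, this local Pauli coincides with $(M_k)_u=(M'_k)_u$, so on every power vertex the submeasurement condition holds trivially.

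For a chain vertex $(2s)_{(u,v)}$ the two neighbours are odd-indexed chain vertices (with $v$ replacing one of them when $s=d$); by the same relabelling symmetry each of these lies in $U'_k$ iff $v\in U_k$, giving $\delta'=\delta_u$ and $t'=2\delta_v$. Eq.~\eqref{eq:rules} then returns the local Pauli $X^{\delta_u}\in\{X,\mathds{1}\}$ and local sign $i^{2\delta_u\delta_v}=(-1)^{\delta_u\delta_v}$. A mirror-image computation on an odd chain vertex $(2s-1)_{(u,v)}$ yields local Pauli $X^{\delta_v}\in\{X,\mathds{1}\}$ and the same local sign $(-1)^{\delta_u\delta_v}$. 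In all cases the Pauli is either $X$ or $\mathds{1}$, which matches $(M'_k)_{v'}=X$ by Def.~\ref{def:powerm}; therefore $C'_k=\chi'_k S'_k$ is indeed a submeasurement of $M'_k$.

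Finally, to obtain $\chi'_k=\chi_k$ I multiply all local sign contributions. Along every edge $(u,v)\in E$ there are $d$ even and $d$ odd chain vertices, each contributing $(-1)^{\delta_u\delta_v}$, so the total chain contribution per edge is $(-1)^{2d\delta_u\delta_v}=+1$. The chain vertices therefore contribute trivially, leaving $\chi'_k=\prod_{u\in V}\chi_u=\chi_k$. The only real obstacle is the bookkeeping around the symmetry $(r)_{(u,v)}=(2d{+}1{-}r)_{(v,u)}$ and the boundary cases at $s=1$ and $s=d$; once these are handled consistently, the result follows mechanically from Eq.~\eqref{eq:rules}.
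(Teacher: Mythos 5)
Your proposal is correct and follows essentially the same route as the paper's proof: apply the local map of Eq.~\eqref{eq:rules} to $S'_k$, observe that power vertices inherit exactly the generator contributions (and hence the local Pauli and sign) of the original stabiliser element, and that each chain vertex yields $X$ or $\mathds{1}$ with the per-edge sign contributions cancelling because each chain has an even number ($2d$) of vertices. Your version merely makes the $\delta'$, $t'$ bookkeeping and the relabelling $r_{(u,v)}=(2d{+}1{-}r)_{(v,u)}$ explicit where the paper argues the same facts verbally.
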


\begin{proof} First, we map the Inflated stabiliser element from its decomposition in generator elements to the representation with a sign factor $\chi^\prime_k$ and local Pauli operators using Eq.~\eqref{eq:rules}. From Def.~\ref{def:powers} the power vertices contribute a generator element if and only if a vertex on the original graph. The nearest-neighbour (a chain vertex) of a power vertex on the inflated graph contributes a generator element if and only if a nearest-neighbour vertex on the original graph contributes a generator element. Thus, the contribution of generator elements on the power vertices is the same as for the original vertices. As a result, a power vertex has the same Pauli operator with the same local sign as the vertex on the original stabiliser element. Therefore, the submeasurement perfectly coincides with the Pauli measurements on the power vertices and the overall sign is the same.\\
For the chain vertices, they contribute a generator element if and only if their neighbors at distance $2$. As a result the respective $t_w \, \text{mod} \, 2 = 0$ from Eq.\,\eqref{eq:rules} and therefore, $\sigma_w = X_w$ or $\sigma_w = \mathds{1}_w$, depending on whether chain vertex $w$ contributes a generator element. This complies with the Inflated measurements where all Pauli operators on the chain vertices are $X$. A minus sign occurs if chain vertex $w$ and its nearest neighbors contribute a generator element, in which case all chain vertices of a given chain do. Since there are an even number of chain vertices in every chain the cumulative sign is $+1$. Altogether the chain vertices do not invoke a sign change.\end{proof}
%
\begin{lemma}\label{lem:fulf} The submeasurements $(C^\prime_1=\chi^{\phantom{\prime}}_1 S^\prime_1,
\allowbreak \dots, \allowbreak C^\prime_N=\chi^{\phantom{\prime}}_N S^\prime_N)$ of the Inflated measurements $(M^\prime_1,\allowbreak\dots,\allowbreak M^\prime_N)$ fulfill Eq.~\eqref{ali:cond-3} for all vertices and Eq.~\eqref{ali:cond-1}, \eqref{ali:cond-2} on the inflated graph state.\end{lemma}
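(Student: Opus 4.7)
My plan is to reduce all three required conditions to a single combinatorial statement, namely that the multiplicity $m_u := |\{k : u \in U_k\}|$ with which each power vertex $u$ appears in the decompositions $S_k = \prod_{u \in U_k} g_u$ is even, and then to prove that statement from $\prod_k M_k = -\mathds{1}^{\otimes V}$. The first of the three conditions, Eq.~\eqref{ali:cond-1}, is immediate: $C'_k = \chi_k S'_k$ holds by construction, and Lemma~\ref{lem:powers} has already identified the local sign $\chi'_k$ with the original global sign $\chi_k$.

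To establish that each $m_u$ is even, I will substitute $M_k = \chi_k \prod_{u \in U_k} g_u$ into $\prod_k M_k = -\mathds{1}^{\otimes V}$ and use the commutativity of the generators together with $g_u^2 = \mathds{1}^{\otimes V}$ to reorganise the product as $(\prod_k \chi_k)\prod_{u \in V} g_u^{m_u} = -\mathds{1}^{\otimes V}$. Since the graph-state stabiliser group has $|V|$ independent generators (being the full stabiliser of a pure state), the product $\prod_u g_u^{m_u}$ can be proportional to the identity only if every $m_u$ is even, in which case $\prod_u g_u^{m_u} = \mathds{1}^{\otimes V}$ exactly and therefore $\prod_k \chi_k = -1$. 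The same manipulation carried out on the inflated graph --- using that the inflated generators $f_u$ from Def.~\ref{def:infvstab} again commute pairwise and square to the identity --- gives $\prod_k S'_k = \prod_u f_u^{m_u} = \mathds{1}^{\otimes V'}$, and hence $\prod_k C'_k = (\prod_k \chi_k)\,\mathds{1}^{\otimes V'} = -\mathds{1}^{\otimes V'}$, which is Eq.~\eqref{ali:cond-2}.

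For the parity condition Eq.~\eqref{ali:cond-3} on the inflated graph, I will split into cases by vertex type. On a power vertex $v \in V$, Lemma~\ref{lem:powers} guarantees $(C'_k)_v = (M_k)_v$, so the even-occurrence of each non-trivial Pauli is directly inherited from the assumption that $\{M_k\}_k$ satisfies Eq.~\eqref{ali:cond-3}. On a chain vertex of the edge $(u,v)$, the explicit Pauli form of $f_u$ and $f_v$ from Def.~\ref{def:infvstab} shows that among all inflated generators only these two act non-trivially on that chain, each placing $X$ on alternating chain vertices and the identity on the complementary ones; consequently $(S'_k)$ on any chain vertex equals either $X$ or $\mathds{1}$, the count of $X$'s across $k$ is either $m_u$ or $m_v$, both even, and the Paulis $Y$, $Z$ never occur there so their counts vanish trivially. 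The main technical hurdle is thus the parity of the multiplicities $m_u$, which is a consequence of the independence of the graph-state generators; once that parity is in hand the remaining verifications are pure bookkeeping with the commutation and involution properties of the inflated generators together with their explicit Pauli form.
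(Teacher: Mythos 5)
Your proof is correct, and it follows the paper's skeleton for most steps (Eq.~\eqref{ali:cond-1} and the power-vertex part of Eq.~\eqref{ali:cond-3} via Lemma~\ref{lem:powers}; the chain-vertex part via the observation that the inflated stabilisers place only $X$ or $\mathds{1}$ on alternating chain positions). The genuine difference is where you get the key parity fact that each multiplicity $m_u=|\{k:u\in U_k\}|$ is even. The paper extracts it from Eq.~\eqref{ali:cond-3} on the \emph{original} graph: by Eq.~\eqref{eq:rules}, vertex $u$ contributes a generator to $S_k$ exactly when $(M_k)_u\in\{X_u,Y_u\}$, so $m_u$ is the sum of two even counts. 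You instead extract it from Eq.~\eqref{ali:cond-2} on the original graph, writing $\prod_k M_k=(\prod_k\chi_k)\prod_u g_u^{m_u}=-\mathds{1}$ and invoking the independence of the graph-state generators together with $-\mathds{1}\notin\mathcal{S}$. Both routes are sound since both hypotheses are available; yours has the advantage of making explicit the step the paper leaves implicit when it asserts Eq.~\eqref{ali:cond-2} for the $C'_k$, namely that $\prod_k S'_k=\prod_u f_u^{m_u}=\mathds{1}^{\otimes V'}$ (and hence that the sign $-1$ is carried entirely by $\prod_k\chi_k$), whereas the paper's per-vertex counting identifies $m_u$ concretely as $\#X_u+\#Y_u$, which is the form reused later in Lemmas~\ref{lem:robp} and~\ref{lem:robc}.
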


\begin{proof} According to Lemma \ref{lem:powers} the submeasurements fulfill Eq.~\eqref{ali:cond-1}. In its proof, we also show that $(C^\prime_k)_u = (M_k)_u$ on all power vertices $u$. Furthermore, the chain vertices measure Pauli operator $X$ if and only if they contribute a generator element, which they do if and and only if the closest power vertex at even distance (from Def.\,\ref{def:infvstab}). Due to,\,\eqref{ali:cond-3} for the original Pauli measurements, the Pauli operators and generator elements occur an even number of times across the set of original Pauli measurements (which are proportional to stabiliser elements on the original graph). As a result, Pauli operators occur an even number of times in the set of submeasurements for all power and chain vertices. Consequently, Eq.~\eqref{ali:cond-3} holds for all vertices. Lastly, following Lemma \ref{lem:powers}, it is $\chi^{\phantom{\prime}}_k = \chi^{\prime}_k $ and therefore Eq.~\eqref{ali:cond-2} holds for the set of submeasurements.\end{proof}

\begin{lemma}\label{lem:robp}The submeasurements 
$(C^\prime_1=\chi^{\phantom{\prime}}_1 S^\prime_1,\allowbreak \dots, \allowbreak C^\prime_N=\chi^{\phantom{\prime}}_N S^\prime_N)$ and Inflated measurements $(M^\prime_1,\dots,M^\prime_N)$ fulfill Eq.~\eqref{eq:infcond-3} on the power vertices.\end{lemma}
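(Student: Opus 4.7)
The plan is to reduce Eq.~\eqref{eq:infcond-3} on a power vertex $u$ to the already-assumed Eq.~\eqref{ali:cond-3} on the same vertex of the original graph state, by exploiting the fact that distance $d$ along the inflated graph is too short to reach any other power vertex.

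First I would describe the distance-$d$ neighbourhood of a power vertex $u$ in the inflated graph. By construction, any edge $(u,v)$ of the original graph is replaced by a chain of $2d$ new vertices $1_{(u,v)},\dots,(2d)_{(u,v)}$, so the nearest other power vertex is at distance $2d+1>d$. Hence this neighbourhood consists of $u$ together with the chain vertices $r_{(u,v)}$ for $r\in[1,d]$ and each original edge $(u,v)\in E$, and contains no other power vertex. Consequently, in every Inflated measurement $M^\prime_k$ (Def.~\ref{def:powerm}), the Pauli operator on every chain vertex inside this neighbourhood is $X$, independently of $k$.

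From that, the tensor product $(\mathbf{M}^\prime_k)^d_u$ is fully determined by $(M^\prime_k)_u = (M_k)_u$: its chain-part is the fixed all-$X$ string for every $k$. Therefore, for any $\sigma^d_u$ whose chain-restriction is not all-$X$, the set in Eq.~\eqref{eq:infcond-3} is empty, which is even. Otherwise, writing $\sigma_u$ for the restriction of $\sigma^d_u$ to $u$, the condition $(\mathbf{M}^\prime_k)^d_u = \sigma^d_u$ is equivalent to $(M_k)_u = \sigma_u$.

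The key step is then to relate the requirement $(C^\prime_k)_u\neq\mathds{1}_u$ back to the original measurements. By Lemma~\ref{lem:powers}, $C^\prime_k = \chi_k S^\prime_k$ agrees on every power vertex with the Pauli operator of $S_k$ on that vertex of the original graph, and since $M_k = \chi_k S_k$, we obtain $(C^\prime_k)_u=(M_k)_u$. Hence the counted set becomes $\{k: (M_k)_u = \sigma_u \wedge (M_k)_u \neq \mathds{1}_u\}$, which is empty when $\sigma_u=\mathds{1}_u$ and equals $\{k:(M_k)_u=\sigma_u\}$ when $\sigma_u$ is a non-trivial Pauli. In the latter case, its cardinality is even by the hypothesis that $(M_1,\dots,M_N)$ satisfies Eq.~\eqref{ali:cond-3} on the original graph. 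This covers all choices of $\sigma^d_u$ and establishes Eq.~\eqref{eq:infcond-3} at every power vertex.

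There is no substantial obstacle here: the whole argument is structural bookkeeping, made possible by two design choices of the inflation — chains of length $2d$ separate power vertices beyond the allowed communication range, and the Inflated measurements hide no information on the power vertices themselves. The only care to take is the case split on whether the restriction of $\sigma^d_u$ to the chain vertices is the all-$X$ string, and on whether $\sigma_u$ is trivial or not, both of which make the counted set either empty or an original-graph set already controlled by the hypothesis.
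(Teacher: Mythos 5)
Your proof is correct and follows essentially the same route as the paper: the distance-$d$ neighbourhood of a power vertex contains only chain vertices, all fixed to $X$ in every Inflated measurement, so Eq.~\eqref{eq:infcond-3} at a power vertex reduces to Eq.~\eqref{ali:cond-3} for the original measurements via $(C^\prime_k)_u=(M_k)_u$. Your version merely makes explicit the case splits (non-all-$X$ chain restriction, trivial $\sigma_u$) that the paper leaves implicit.
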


\begin{proof}The Inflated measurements have Pauli operator $X$ on all chain vertices. Up to distance $d$, all neighbour vertices of any power vertex $u$ are chain vertices, thus $(\mathbf{M}_k)_u^d= \sigma^k_u \bigotimes_{0 < \vert u-v \vert\leq d} X_v$. Since the $\lbrace \sigma^k_u \rbrace_k = \lbrace (M_k)_u \rbrace_k$ fulfill Eq.~\eqref{ali:cond-3} for all $u \in V$, the $\lbrace M^\prime_k, C^\prime_k \rbrace_k$ fulfill Eq.~\eqref{eq:infcond-3} on all power vertices.\end{proof}

\begin{lemma} \label{lem:robc}
Either the submeasurements $(C^\prime_1=\chi^{\phantom{\prime}}_1 S^\prime_1,\allowbreak \dots,\allowbreak C^\prime_N=\chi^{\phantom{\prime}}_N S^\prime_N)$ and Inflated measurements $(M^\prime_1, \allowbreak \dots,\allowbreak M^\prime_N)$ fulfill Eq.~\eqref{eq:infcond-3} on the chain vertices or we can add pairs of decoy measurements from Def.~\ref{def:decoy} with a Shell stabiliser element as both submeasurements to the set of Pauli measurements such that the complete set fulfills Eq.~\eqref{eq:infcond-3} on all vertices.
\end{lemma}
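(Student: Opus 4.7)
Plan for the proof:

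First I would do a case analysis on chain vertices. Each chain vertex $w = r_{(u,v)}$ with $r \in [1, 2d]$ has exactly one power vertex within graph-distance $d$ — namely $u$ if $r \le d$, else $v$ — since the chain has $2d$ internal vertices. Because inflated measurements place $X$ on every chain vertex, the $d$-local pattern $(\mathbf{M}'_k)^d_w$ depends only on the Pauli operator of $M_k$ at that nearby power vertex, and by the argument of Lemma~\ref{lem:powers}, $(C'_k)_w \neq \mathds{1}$ iff $u \in U_k$ (for $r$ even) or $v \in U_k$ (for $r$ odd). This splits the chain vertices into four cases; the ``matching'' ones (even $r \le d$ or odd $r > d$) are immediate because Eq.~\eqref{eq:infcond-3} reduces to $|\{k : (M_k)_{u^*} = \sigma \wedge u^* \in U_k\}|$ being even, which is either $|\{k : (M_k)_{u^*} = \sigma\}|$ (even by Eq.~\eqref{ali:cond-3}) when $\sigma \in \{X, Y\}$ forces $u^* \in U_k$, or $0$ when $\sigma \in \{Z, \mathds{1}\}$ precludes $u^* \in U_k$. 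The remaining ``non-matching'' cases (odd $r \le d$ or even $r > d$) demand that $n_{u, v, \sigma} := |\{k : (M_k)_u = \sigma \wedge v \in U_k\}|$ be even for every Pauli $\sigma$ and every edge $(u, v)$; if all such already hold, we are in the first branch of the disjunction.

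Otherwise, I would collect the targets at each power vertex $u$ into a $4 \deg(u)$-bit vector $\tau_u = (n_{u, v, \sigma} \bmod 2)_{\sigma, v \sim u}$. Using the classification of $(M_k)_u$ via Eq.~\eqref{eq:rules} together with Eq.~\eqref{ali:cond-3}, a direct calculation shows that both $\sum_{v \sim u} n_{u, v, \sigma}$ (for each Pauli $\sigma$) and $\sum_\sigma n_{u, v, \sigma} = |\{k : v \in U_k\}|$ (for each neighbour $v$) vanish modulo $2$. Hence $\tau_u$ lies in the subspace $E_4 \otimes W_0 \subset \mathbb{F}_2^4 \otimes \mathbb{F}_2^{\deg(u)}$ (matrices with columns of even weight and rows summing to zero) of dimension $3(\deg(u) - 1)$. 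A decoy pair $(u, v_1, v_2, \sigma, \sigma')$ contributes the pure tensor $(e_\sigma + e_{\sigma'}) \otimes (e_{v_1} + e_{v_2})$ to $\tau_u$: the two decoys agree everywhere except at $u$, the Shell stabiliser $\tilde{S}$ acts as $X$ exactly at the odd positions $\le d$ from $u$ on the chains to $v_1, v_2$ (the only chain vertices within distance $d$ of $u$ where the submeasurement is non-trivial), and at each such chain vertex the two decoys' local $d$-patterns differ only in the Pauli at $u$. As $(\sigma, \sigma')$ and $(v_1, v_2)$ range over allowed choices, these simple tensors span all of $E_4 \otimes W_0$, so any target $\tau_u$ can be realised as a sum of decoy-pair toggles, independently at each power vertex (and trivially for $\deg(u) \le 1$, where $\tau_u$ is already forced to zero).

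Finally, I would verify that the added decoys do not disturb parities anywhere else. At power vertices, the two decoys in a pair coincide (they differ only at $u$, where $(\tilde{S})_u = \mathds{1}$), so they contribute $0$ or $+2$ to every count. At chain vertices outside the targeted region — either on chains incident to $v_1, v_2$ but not $u$, or on the targeted chains at positions where $\tilde{S}$ is trivial — either the submeasurement is identity or the two decoys have identical local $d$-patterns (because $u$ lies at graph-distance more than $d$ from such vertices). In both situations parity is preserved, and combining everything the augmented set fulfils Eq.~\eqref{eq:infcond-3} on all vertices. The main technical obstacle is the middle step: checking that $\tau_u$ falls in $E_4 \otimes W_0$ and that decoy toggles span this subspace exactly, which relies on careful bookkeeping via Eq.~\eqref{eq:rules} and the matching dimension count $3(\deg(u) - 1)$.
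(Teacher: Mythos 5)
Your opening case analysis on chain vertices --- exactly one power vertex lies within distance $d$ of a chain vertex, the submeasurement is non-trivial there iff the even-distance endpoint contributes a generator, the ``matching'' positions are handled by Eq.~\eqref{ali:cond-3} at the power vertex, and the ``non-matching'' positions reduce to the parities $n_{u,v,\sigma}$ --- is exactly the paper's first step, and you invoke the same decoys of Def.~\ref{def:decoy} with essentially the same non-disturbance checks at power vertices and untargeted chain vertices. Where you genuinely diverge is the core combinatorial step. The paper argues informally that any residual defect can be isolated into a ``quartet'' of measurements $M^\prime_{k_1},M^\prime_{k_2},M^\prime_{l_1},M^\prime_{l_2}$, pairwise sharing Paulis at $u$ but with opposite generator contributions from two neighbours $v,v^\prime$, and assigns one decoy pair per quartet; you instead package all defects at $u$ into $\tau_u=(n_{u,v,\sigma}\bmod 2)_{\sigma, v\sim u}$, prove the row- and column-sum constraints place it in $E_4\otimes W_0$, observe that a decoy pair toggles the pure tensor $(e_\sigma+e_{\sigma^\prime})\otimes(e_{v_1}+e_{v_2})$, and finish by a spanning argument. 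This is cleaner, turns the existence of a correcting decoy set into a one-line spanning/dimension fact rather than a case-by-case pairing, and makes the independence across power vertices explicit. Two caveats. First, your spanning claim needs $\sigma,\sigma^\prime$ to range over all four of $\{\mathds{1},X,Y,Z\}$ (or a separate argument that the $\mathds{1}$-row of $\tau_u$ vanishes): pairs drawn only from $\{X,Y,Z\}$ span a subspace of dimension $2(\deg u-1)$, not all of $E_4\otimes W_0$, whose dimension is $3(\deg u-1)$. Second, both your argument and the paper's silently assume that at the targeted chain vertices the decoys' distance-$d$ patterns coincide with those of the Inflated measurements away from $u$; this holds for $d=1$, but for $d\ge 2$ and $\deg(u)\ge 3$ Def.~\ref{def:decoy} places $\mathds{1}$ rather than $X$ on the chains from $u$ to neighbours other than $v_1,v_2$, so the patterns land in a different equivalence class and this point deserves an explicit check.
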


\begin{proof}
We evaluate Eq.\,\eqref{eq:infcond-3} for a given chain vertex $w$. 
Let us call $u$ the power vertex at the end of $w$'s chain with even distance, then $(C^\prime_k)_w\neq \mathds{1}_w$ if and only if $u$ contributes a generator element to all $S^\prime_k$ with $C^\prime_k = \chi_k S^\prime_k$. 
Up to distance $d$ along the graph's edges, almost all neighbours of $w$ are chain vertices, thus measuring Pauli operator $X$, the only exception being the closest 
power vertex of $w$, which we call $v$ ($\vert v -w \vert \leq d$).
Therefore, if $u$ contributes a generator element to $S^\prime_k$, then we are interested in $(\mathbf{M}_k)^d_w = \sigma_v \bigotimes_{w^\prime \neq v, 0 \leq \vert w^\prime -w \vert \leq d } X_{w^\prime} $.

For $u = v$, that is for $|w-v|$ even, Eq.\,\eqref{eq:infcond-3} holds for chain vertex $w$, since $(C^\prime_k )_u = (M^\prime_k )_u $ and Lemma \ref{lem:robp} holds on all power vertices $u$.

For $u \neq v$, that is if $|w-v|$ odd, we study how Eq.\,\eqref{eq:infcond-3} might not hold on vertex $w$. Consider a set of four Inflated measurements $M^\prime_{k_1},M^\prime_{k_2},M^\prime_{l_1},M^\prime_{l_2}$ with submeasurements, such that $(C^\prime_{k_1})_u = (C^\prime_{l_1})_u$, $(C^\prime_{k_2})_u = (C^\prime_{l_2})_u$, $(C^\prime_{k_1})_u \neq (C^\prime_{k_2})_u$. Furthermore, recall $C^\prime = \chi_k S^\prime$, power vertex $v$ contributes a generator element to both $S^\prime_{k_{1,2}}$ while it does to neither of $S^\prime_{l_{1,2}}$. Since Eq.\,\eqref{eq:infcond-3} holds on both power vertices $u,v$, such a set either exists or Eq.\,\eqref{eq:infcond-3} holds on vertex $w$.

Since the submeasurement are proportional to Inflated Stabiliser elements, the Pauli operators on the power vertex $u$ depends on the contribution of generator elements from main vertex $u$ and its power vertex neighbors, including $v$, following Eq.\,\eqref{eq:rules}. For two Inflated stabiliser elements with the same Pauli operator on vertex $u$, the contribution to the generator elements from $u$ is the same and from its neighbors modulo two. In the quartet, the measurements $k_1,l_1$ have the same Pauli operator on power vertex $u$ but the power vertex neighbor $v$ does not contribute a generator element to the stabiliser elements they are proportional to. Therefore, there must exist a second power vertex $v^\prime$, which is a power vertex neighbor of $u$ (not necessarily $v$), with opposing contribution of generator element to the stabiliser elements $k_1,l_1$. The same holds for $k_2,l_2$. Note that there might be additional power vertices, but since they must occur in pairs, they can also make their own set of four Inflated measurements. Note furthermore, that, while we focussed on chain vertex $w$, the above quartet accounts of all chain vertices with nearest power vertex $u$ and nearest power vertex at even distance $v$ or $v^\prime$, for all of which Eq.\,\eqref{eq:infcond-3} does not hold.

For such a quartet of Inflated Pauli measurements $M^\prime_{k_1},M^\prime_{k_2},M^\prime_{l_1}, M^\prime_{l_2}$ we add two decoy measurements $\tilde{M}^\prime_1$ and $\tilde{M}^\prime_2$ from Def.~\ref{def:decoy} with one Shell stabilizer element from Def.~\ref{def:shell} for both submeasurements around the power vertex $u$ with neighbor power vertices $v$ and $v^\prime$. The Pauli operators are $(\tilde{M}^\prime_1)_u = (M^\prime_{k_1})_u = (M^\prime_{l_1})_u$ and $(\tilde{M}^\prime_2)_u = (M^\prime_{k_2})_u = (M^\prime_{l_2})_u$. It might occur, that two of the four measurements already have the form of Shell stabiliser elements, in which case, we do not need to add the decoy measurements but alter the two measurements on the power vertex $u$ to have the same Pauli operator as the other to measurements of the quartet.

The added decoy measurements alter Eq.~\eqref{eq:infcond-3} for chain vertices whose closest power vertex is $u$. Additionally, their submeasurement only measure Pauli operator $X$ on chain vertices with odd distance to power vertex $u$. As a result, they target precisely those chain vertices, for which Eq.~\eqref{eq:infcond-3} did not hold. One decoy measurement corrects Eq.~\eqref{eq:infcond-3} either for two chain branches and the same Pauli operator $\sigma_u$ or for the same chain branch and different Pauli operators.\end{proof}

Altogether, the set of Inflated measurements and submeasurements constructed in Def.\ref{def:powerm} and Def.\ref{def:powers} from the original set of Pauli measurements fulfill the Eqs.~\eqref{ali:infcond-1}--\eqref{eq:infcond-3} on all power vertices and most chain vertices. If the above does not hold, one can always isolate responsible sets of four Inflated measurements. Then Def.\ \ref{def:decoy} provides an algorithm to construct specific pairs of decoy measurements and submeasurements such that the overall set fulfills Eq.~\eqref{eq:infcond-3} on all vertices. The additional submeasurements do not impede the results of Lemma \ref{lem:fulf} as they occur in equal pairs.

\section{Bell Inequality} \label{sec: bell inquality}

It is often useful to translate statements of non-locality into statistically robust inequalities, in order to experimentally test, or indeed use these kind of statements.
Given a GHSZ-like paradox on a graph state, i.e. a set of measurements that fulfills Eqs.\ \eqref{ali:cond-3}--\eqref{ali:cond-2}, one can sum up the measurements' operators to form a Bell operator and obtain a corresponding Bell inequality for any LHV model. Specifically, take Eq.\,\eqref{ali:cond-1} and define \( \mathcal{B} = \sum_k \chi_k M_k = \sum_k S_k \)
Suppose the GHSZ-like paradox consists of $N$ measurements, any LHV model predicts 
\begin{equation} \langle \mathcal{B} \rangle_{\text{LHV}} = \sum_{k=1}^N \chi_k < N = \langle \mathcal{B} \rangle_{\text{QM}}  \,. \label{eq:bellin} \end{equation} Due to Eq.\,\eqref{ali:cond-2} the inequality must be strict. 

One can do the same thing for the inflated graphs, to give bell inequalities for $d$-LHV$^\ast$ models. We can define a Bell operator from the set of measurements and corresponding submeasurements on the $d$-inflated graph state, that is $\mathcal{B}^\prime = \sum_k \chi^\prime_k C^\prime_k$. The constituting $N^\prime$ submeasurements form a Bell inequality for any communication-assisted LHV ($d$-LHV$^\ast$) model, \begin{equation} \langle \mathcal{B}^\prime \rangle_{d\text{-LHV}^\ast} = \sum_{k=1}^{N^\prime} \chi^\prime_k < N^\prime = \langle \mathcal{B}^\prime \rangle_{\text{QM}} \,, \label{eq:bellinin} \end{equation} as they fulfill Eqs.\,\eqref{eq:infcond-3}--\eqref{ali:infcond-2}. 
The relations between these values and the number $2s$ of added decoy measurements from Def. \ref{def:shell} that sustain the paradox in the
$d$-LHV$^\ast$ setting are simple:
\begin{gather}
    \ev{\mathcal B'}_{\text{QM}}=N'=N+2s=\ev{\mathcal B}_{\text{QM}}+2s,\\
    \ev{\mathcal B'}_{d\text{-LHV}^\ast}= \sum_{k=1}^{N\mathrlap{'}} \chi^\prime_k= \sum_{k=1}^N \chi_k+2s=\ev{\mathcal B}_{\text{LHV}}+2s,
\end{gather}
The violation ratio of the Bell inequality for the original graph state, against any LHV model, is therefore greater than or equal to the violation ratio of the Bell inequality for the inflated graph state against any $d$-LHV$^\ast$ model: 
 \begin{equation*}
     \frac{\ev{\mathcal B}_{\text{QM}}}{\ev{\mathcal B}_{\text{LHV}}}
     \ge
     \frac{\ev{\mathcal B}_{\text{QM}}+2s}{\ev{\mathcal B}_{\text{LHV}}+2s} = 
     \frac{\ev{\mathcal B'}_{\text{QM}}}{\ev{\mathcal B}_{d\text{-LHV}^\ast}}.
 \end{equation*}
\section{Small Graph States}
\label{sec:example}
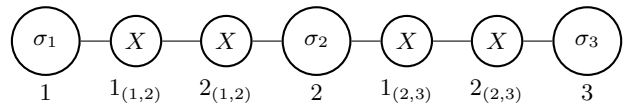
\begin{figure}
    \centering
\begin{tikzpicture}
\begin{scope}[main/.style={circle,thick,draw,minimum size=9mm},chain/.style={circle,thick,draw}]
    \node[main,label={[label distance=0.0cm]270:\small $1$}] (11) at (0,0) {$\sigma_1$};
    \node[chain,label={[label distance=0.05cm]270:\small $1_{(1,2)}$}] (21) at (1.2,0) {$X$};
    \node[chain,label={[label distance=0.05cm]270:\small $2_{(1,2)}$}] (31) at (2.4,0) {$X$};
    \node[main,label={[label distance=0.0cm]270:\small $2$}] (41) at (3.6,0) {$\sigma_2$};
    \node[chain,label={[label distance=0.05cm]270:\small $1_{(2,3)}$}] (51) at (4.8,0) {$X$};
    \node[chain,label={[label distance=0.05cm]270:\small $2_{(2,3)}$}] (61) at (6,0) {$X$};
    \node[main,label={[label distance=0.0cm]270:\small $3$}] (71) at (7.2,0) {$\sigma_3$};
    \draw[-] (11.east) -- (21.west);
    \draw[-] (21.east) -- (31.west);
    \draw[-] (31.east) -- (41.west);
    \draw[-] (41.east) -- (51.west);
    \draw[-] (51.east) -- (61.west);
    \draw[-] (61.east) -- (71.west);
\end{scope}
\end{tikzpicture}
    \caption{Inflated graph for $d=1$ with $7$ vertices in a chain from smallest non-trivial graph with $3$ vertices in a chain (considering only large balls). The symbols inside the balls denote Pauli operators of Pauli measurements on the corresponding graph states.}
    \label{fig:example}
\end{figure}
We now move to consider what are the smallest examples that can contradict $d$-LHV* models. In the original work of \cite{barrett2007modeling} the smallest example used 12 qubits. Our construction can use 7 starting from a line of three (See below). We will further find that through a slightly different construction we can show a contradition with $1$-LHV* models using linear graph state of 4 qubits. Interestingly, to have examples using graph states of fewer than 5 qubits, one requires non-Pauli measurements.

\subsection{Smallest 1-LHV*-violating inflated graph}

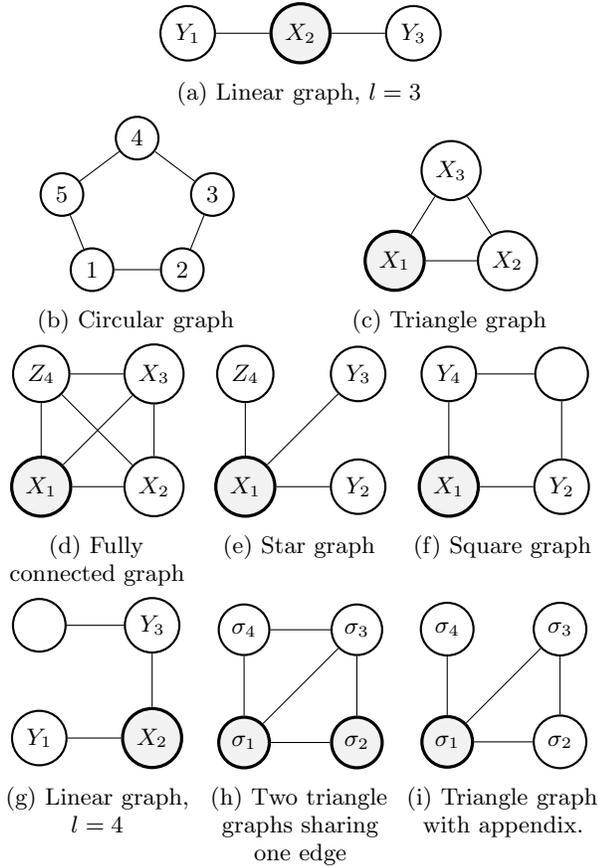
\begin{figure}
     \centering
\begin{subfigure}[b]{0.75\columnwidth}
\begin{tikzpicture}\begin{scope}[highlight/.style={circle, draw=black!100, fill=black!5, very thick, minimum size=5mm}, normal/.style={circle,thick,draw}]
    \node[normal] (1) at (0,0) {$Y_1$};
    \node[highlight] (2) at (1.5,0) {$X_2$};
    \node[normal] (3) at (3,0) {$Y_3$};
    \draw (1) -- (2);
    \draw (2) -- (3);
    \end{scope}\end{tikzpicture}
            \caption{Linear graph, $l=3$}
        \label{fig:ling3}
     \end{subfigure}
\begin{subfigure}[b]{0.45\columnwidth}
\begin{tikzpicture}
\begin{scope}[every node/.style={circle,thick,draw}]
    \node (1) at (0.9,0) {$1$};
    \node (2) at (2.1,0) {$2$};
    \node (3) at (2.5,1) {$3$};
    \node (4) at (1.5,1.75) {$4$};
    \node (5) at (0.5,1) {$5$};
    \draw (1) -- (2);
    \draw (1) -- (5);
    \draw (2) -- (3);
    \draw (3) -- (4);
    \draw (4) -- (5);
    \end{scope}
\end{tikzpicture}
\caption{Circular graph}
\label{fig:circulargraph}
\end{subfigure}
\begin{subfigure}[b]{0.49\columnwidth}
\begin{tikzpicture}\begin{scope}[highlight/.style={circle, draw=black!100, fill=black!5, very thick, minimum size=5mm}, normal/.style={circle,thick,draw}]
    \node[highlight] (1) at (0,0) {$X_1$};
    \node[normal] (2) at (1.5,0) {$X_2$};
    \node[normal] (3) at (0.75,1.2) {$X_3$};
    \draw (1) -- (2);
    \draw (1) -- (3);
    \draw (2) -- (3);
    \end{scope}\end{tikzpicture}
            \caption{Triangle graph}
        \label{fig:trig}
     \end{subfigure}
\begin{subfigure}[t]{0.3\columnwidth}
\begin{tikzpicture}\begin{scope}[highlight/.style={circle, draw=black!100, fill=black!5, very thick, minimum size=5mm}, normal/.style={circle,thick,draw}]
    \node[highlight] (1) at (0,0) {$X_1$};
    \node[normal] (2) at (1.5,0) {$X_2$};
    \node[normal] (4) at (0,1.5) {$Z_4$};
    \node[normal] (3) at (1.5,1.5) {$X_3$};
    \draw (1) -- (2);
    \draw (1) -- (3);
    \draw (1) -- (4);
    \draw (2) -- (3);
    \draw (2) -- (4);
    \draw (3) -- (4);
    \end{scope}\end{tikzpicture}
            \caption{Fully connected graph}
        \label{fig:fullyg}
     \end{subfigure}
\begin{subfigure}[t]{0.3\columnwidth}
\begin{tikzpicture}\begin{scope}[highlight/.style={circle, draw=black!100, fill=black!5, very thick, minimum size=5mm}, normal/.style={circle,thick,draw}]
    \node[highlight] (1) at (0,0) {$X_1$};
    \node[normal] (2) at (1.5,0) {$Y_2$};
    \node[normal] (4) at (0,1.5) {$Z_4$};
    \node[normal] (3) at (1.5,1.5) {$Y_3$};
    \draw (1) -- (2);
    \draw (1) -- (3);
    \draw (1) -- (4); \end{scope}\end{tikzpicture}
        \caption{Star graph}
        \label{fig:starg}
     \end{subfigure}
\begin{subfigure}[t]{0.3\columnwidth}
\begin{tikzpicture}\begin{scope}[highlight/.style={circle, draw=black!100, fill=black!5, very thick, minimum size=5mm}, normal/.style={circle,thick,draw,minimum size=7mm}]
    \node[highlight] (1) at (0,0) {$X_1$};
    \node[normal] (2) at (1.5,0) {$Y_2$};
    \node[normal] (4) at (0,1.5) {$Y_4$};
    \node[normal] (3) at (1.5,1.5) {};
    \draw (1) -- (2);
    \draw (1) -- (4);
    \draw (2) -- (3);
    \draw (3) -- (4); \end{scope}\end{tikzpicture}
        \caption{Square graph}
        \label{fig:squg}
     \end{subfigure}
     \begin{subfigure}[t]{0.3\columnwidth}
\begin{tikzpicture}\begin{scope}[highlight/.style={circle, draw=black!100, fill=black!5, very thick, minimum size=5mm}, normal/.style={circle,thick,draw, minimum size=7mm}]
    \node[normal] (1) at (0,0) {$Y_1$};
    \node[highlight] (2) at (1.5,0) {$X_2$};
    \node[normal] (4) at (0,1.5) {};
    \node[normal] (3) at (1.5,1.5) {$Y_3$};
    \draw (1) -- (2);
    \draw (2) -- (3);
    \draw (3) -- (4); \end{scope}\end{tikzpicture}
        \caption{Linear graph, $l=4$}
     \end{subfigure}
\begin{subfigure}[t]{0.3\columnwidth}
\begin{tikzpicture}\begin{scope}[highlight/.style={circle, draw=black!100, fill=black!5, very thick, minimum size=5mm}, normal/.style={circle,thick,draw}]
    \node[highlight] (1) at (0,0) {$\sigma_1$};
    \node[highlight] (2) at (1.5,0) {$\sigma_2$};
    \node[normal] (4) at (0,1.5) {$\sigma_4$};
    \node[normal] (3) at (1.5,1.5) {$\sigma_3$};
    \draw (1) -- (2);
    \draw (1) -- (3);
    \draw (1) -- (4);
    \draw (2) -- (3);
    \draw (3) -- (4); \end{scope}\end{tikzpicture}
        \caption{Two triangle graphs sharing one edge }
        \label{fig:twotg}
     \end{subfigure}
\begin{subfigure}[t]{0.3\columnwidth}
\begin{tikzpicture}\begin{scope}[highlight/.style={circle, draw=black!100, fill=black!5, very thick, minimum size=5mm}, normal/.style={circle,thick,draw,,minimum size=7mm}]
    \node[highlight] (1) at (0,0) {$\sigma_1$};
    \node[normal] (2) at (1.5,0) {$\sigma_2$};
    \node[normal] (4) at (0,1.5) {$\sigma_4$};
    \node[normal] (3) at (1.5,1.5) {$\sigma_3$};
    \draw (1) -- (2);
    \draw (1) -- (3);
    \draw (2) -- (3);
    \draw (1) -- (4); \end{scope}\end{tikzpicture}
        \caption{Triangle graph with appendix.}
        \label{fig:tappg}
     \end{subfigure}
        \caption{Circular Graph of $5$ vertices in b) and all connected graphs with $3$ and $4$ vertices. Highlighted vertices flip output variable's sign in $1$-LHV$^\ast$ model if they and their direct neighbours measure the indicated Pauli operators. Unspecified operators might be any Pauli operator. Symmetrically equivalent vertices and operators also flip the variables' sign. In (h) it is $\sigma_1 \sigma_2 \sigma_3 \sigma_4 =X_1Y_2Z_3Y_4$ for vertex 1 and $\sigma_1 \sigma_2 \sigma_3=Y_1X_2Y_3$ for vertex 2. In (i) it is either $\sigma_1 \sigma_2 \sigma_3 \sigma_4=Y_1X_2X_3Y_4$ or $\sigma_1 \sigma_2 \sigma_3 \sigma_4=X_1X_2X_3Z_4$.}
        \label{fig:signflip}
\end{figure}

Consider first the linear graph state with $7$ vertices drawn in Fig.\ \ref{fig:example}. It can be considered as an $d=1$-inflated graph of the linear graph state with three qubits in a line, which is locally equivalent to the GHZ state. Here, no LHV model can reproduce the measurement outcome of the four Pauli products $M_1 = Y_1 X_2 Y_3 $, $M_2 = Y_1 Y_2 Z_3$, $M_3 = Z_1 Y_2 Y_3$, $M_4 = Z_1 X_2 Z_3$. 
Then, we construct the Pauli measurements and submeasurements (omitting grayed out Pauli operators)
\begin{align*}
M^\prime_1 &= Y_1\, X_{1_{(1,2)}} X_{2_{(1,2)}} X_2\, X_{1_{(2,3)}} X_{2_{(2,3)}} Y_3\, \,,\\
M^\prime_2 &= Y_1\,X_{1_{(1,2)}} X_{2_{(1,2)}} \, Y_2\,\, \textcolor{gray}{X_{1_{(2,3)}}} X_{2_{(2,3)}} Z_3 \,,\\ 
M^\prime_3 &= Z_1 X_{1_{(1,2)}} \textcolor{gray}{X_{2_{(1,2)}}} \, Y_2\,\, X_{1_{(2,3)}} X_{2_{(2,3)}} Y_3\, \,,\\
M^\prime_4 &= Z_1 X_{1_{(1,2)}} \textcolor{gray}{X_{2_{(1,2)}}} X_2\, \textcolor{gray}{X_{1_{(2,3)}}} X_{2_{(2,3)}} Z_3 \,,\\
\tilde{M}^\prime_1 &= X_1 \textcolor{gray}{X_{1_{(1,2)}}} X_{2_{(1,2)}} \textcolor{gray}{X_2}\, X_{1_{(2,3)}} \textcolor{gray}{X_{2_{(2,3)}}} X_3 \,,\\
\tilde{M}^\prime_2 &=  X_1 \textcolor{gray}{X_{1_{(1,2)}}} X_{2_{(1,2)}} \, \textcolor{gray}{Y_2}\, X_{1_{(2,3)}} \textcolor{gray}{X_{2_{(2,3)}}} X_3 \,,
\end{align*}
which leads to the constraints  $\langle M^\prime_k \rangle_{\text{QM}} = \langle M^\prime_k \rangle_{1\text{-LHV}^\ast}$, for any nearest-neighbour communication-assisted LHV model
\begin{align*}
-1 &= y_1\, x^Y_{1_{(1,2)}} x^X_{2_{(1,2)}} x_2\, x^X_{1_{(2,3)}} x^Y_{2_{(2,3)}} y_3\, \,, \\
\phantom{-}1 &= y_1\, x^Y_{1_{(1,2)}} x^Y_{2_{(1,2)}} y_2\,\, \phantom{x_{1_{(2,3)}}^Y} x^Z_{2_{(2,3)}} z_3 \,,\\ 
\phantom{-}1 &= z_1 x^Z_{1_{(1,2)}} \phantom{x_{2_{(1,2)}}^Y} y_2\,\, x^Y_{1_{(2,3)}} x^Y_{2_{(2,3)}} y_3\, \,, \\
\phantom{-}1 &= z_1 x_{1_{(1,2)}}^Z \phantom{x_{2_{(1,2)}}^X} x_2 \, \phantom{x_{1_{(2,3)}}^X} x^Z_{2_{(2,3)}} z_3\,,\\
\phantom{-}1 &= x_1 \phantom{x_{1_{(1,2)}}^X} x^X_{2_{(1,2)}} \phantom{x_4}\, x^X_{1_{(2,3)}} \phantom{x_{2_{(2,3)}}^X} x_3\,,\\
\phantom{-}1 &= x_1 \phantom{x_{1_{(1,2)}}^X} x_{2_{(1,2)}}^Y \phantom{y_4}\,\, x^Y_{2_{(2,3)}} \phantom{x_{2_{(2,3)}}^X} x_3 \,,
\end{align*}
The superscripts denote the information acquired from communication with the vertices that change their Pauli operator in the different measurements. Multiplying the constraints results in $-1 = 1$, a contradiction. Thus, no 1-LHV$^\ast$ model can predict all correlations for the $7$ qubit chain graph state. Note that the Pauli operators for Pauli products $M^\prime_{1,2,3,4}$ match the ones in $M_{1,2,3,4}$ on the power vertices $1,2,3$. The Pauli operators $M^\prime_{5,6}$ are added as decoy measurements so that the LHV model fails to distinguish between power vertex $2$ measuring $X_2$ or $Y_2$ on the adjacent chain vertices.

More quantitatively, this example shows a 1-LHV$^\ast$ violation ratio of $\tfrac{6}{4}=\tfrac{3}{2}$ by 
by inflating the 3-qubit linear graph state to 7 qubits, the later showing a LHV violation ration of $\tfrac{4}{2}=2$.

\subsection{Smallest 1-LHV*-violating graph through Pauli measurements}

The smallest graph state where no 1-LHV*
model can reproduce submeasurements from Pauli measurements is the one with $5$ vertices in a circle which is drawn in Fig.\,\ref{fig:circulargraph}. These do not follow the structure of the Inflated Pauli measurements and $16$ Pauli measurements are needed that exploit the rotational symmetry of the cyclic graph. These are
\begin{align*}
&M^\prime_{1} \hspace{-1.5cm}&&= \,\,X_1 X_2 X_3 X_4 X_5 \,\,\,, \\
&M^\prime_{2+k} \hspace{-1.5cm}&&= (X_1 \textcolor{gray}{X_2} Y_3\, Y_4\, \textcolor{gray}{X_5})_k\,,\\
&M^\prime_{7+k} \hspace{-1.5cm}&&= (Y_1\,X_2 X_3 Y_4\, \textcolor{gray}{Y_5}\,)_k\,, \\
&M^\prime_{12+k} \hspace{-1.5cm}&&= (Y_1\,X_2 X_3 Y_4\, \textcolor{gray}{X_5})_k\,,
\end{align*}
with $k=0,\dots,4$ denoting a cyclic rotation of the operators. Comparing $\langle M^\prime_k \rangle_{\text{QM}}  = \langle M^\prime_k \rangle_{1-\text{LHV}^\ast}$ leads to constraints on the variables
\begin{align*}
-1 &= x^{XX}_1 x^{XX}_2 x^{XX}_3 x^{XX}_4 x^{XX}_5 \,, \\
\phantom{-}1 &= (x^{XX}_1 \phantom{x^{XY}_2} y^{XY}_3 y^{YX}_4\, \phantom{x^{YX}_5})_k \,,\\
\phantom{-} 1 &= (y^{YX}_1\, x^{YX}_2 x^{XY}_3 y^{XY}_4\, \phantom{y^{YY}_5}\, )_k\,,\\
\phantom{-} 1 &= (y^{XX}_1\,x^{YX}_2 x^{XY}_3 y^{XX}_4\, \phantom{x^{YY}_5})_k \,,
\end{align*}
which leads to a contradiction.
We have thus obtained a 1-LHV$^*$ violation of $\tfrac{16}{14}$ using Pauli measurements on five qubits.

In  order to prove that a graph state with $5$ qubits is the smallest graph state such that no 1-LHV$^*$ model can predict all Pauli measurements, we show that for every graph state with less than $5$ qubits, there exist a 1-LHV$^*$ model that correctly predicts the outcome of all Pauli measurements. The model is similar to one proposed by Barrett, Caves et al.\ in \cite{barrett2007modeling}.  Specifically, for every vertex $u$, we define hidden variables $z_u,x_u,y_u=\pm 1$ as the output values for a $Z_u,X_u,Y_u$ Pauli measurement. The variables' value $z_u \pm =1$ is chosen uniformly at random, then one calculates the values of variables $x_u = \prod_{(u,v)\in E} z_v$ and $y_u= x_u z_u$. If no measurement is performed on a vertex, the assumed output is $1$. So far the model is a LHV model since it does not rely on any classical communication. Every vertex has the variables $1,z_u,x_u,y_u$ at disposal as output given a local Pauli operator. Any Pauli measurements outcome is then a product of these variables.

We argue that this model correctly predicts any Pauli measurement except when $M= \bigotimes_u \sigma_u = - S$  for a stabiliser element in  Eq.\eqref{eq:stab-decomposition2}. Compare $1 = x_u \prod_{(u,v)\in E} z_v$ to Eq.\eqref{eq:vertexstabelement} to note that the model predicts the measurement of a generator element of the graph state correctly. Furthermore, $y_u = x_u z_u$ mimics the product of Pauli operators up to the missing imaginary unit $\mathrm{i}$. Therefore, the model only makes an error, if the missing imaginary units inflict a sign change, which happens exactly for the stabiliser elements with $\chi = -1$ in Eq.\,\eqref{eq:stab-decomposition2}. A more detailed reasoning can be found in \cite{barrett2007modeling}.

Including a round of classical communication between along the graph's edges up to distance $d$, a successful model selects the cases when a sign flip of a vertex's output variable is necessary, while remaining foolproof to a submeasurement scheme ignoring certain vertices' outputs. For every connected graph with less than $5$ vertices, we display the cases of a needed sign flip in Fig. \ref{fig:signflip} based on the LHV model described above. An analysis of all the possible Pauli measurements and submeasurements on the graph states reveals that the Distance-$1$-communication-assisted LHV model predicts the outcomes in accordance with a quantum measurement on the graph state.\\

\subsection{Smallest 1-LHV*-violating graph through Clifford measurements}

The smallest example for any violation of a 1-LHV*
model using binary inputs and outputs
is the distance-$1$ inflated CHSH inequality that resorts to Clifford measurements on a chain of $4$ qubits. The Bell operator is 
\begin{align}
    \mathcal{B} = &\left(R_X(\tfrac{\pi}{2})_1 X_2 + R_X(\tfrac{3\pi}{2})_1 X_2 \right) \textcolor{gray}{X_3} X_4 \nonumber \\ &+ \left(R_X(\tfrac{\pi}{2})_1 X_2 - R_X(\tfrac{3\pi}{2})_1 X_2 \right) X_3 Y_4 \,,
\end{align}
with $R_X(\theta)_1 = \cos(\tfrac{\theta}{2}) Z_1 + \sin(\tfrac{\theta}{2}) Y_1$ and the grayed-out operator output is ignored in the measurement's outcome. A 1-LHV$^*$ 
model obeys $\langle \mathcal{B} \rangle_{1\text{-LHV}^*} \leq 2$ while the measurement outcome on the $4$ qubit chain graph state equals $\langle \mathcal{B}_{\text{QM}} \rangle = 2\sqrt{2}$, leading to a violation ratio of $\sqrt2$. This can easily be checked using the same reasoning as above.


This 4-qubit graph is the smallest graph state showing violations for the restricted case of binary inputs and outputs.
Indeed, the both 3-qubit graphs states can be simulated classically in this setting.
This is obvious for the triangle, where everyone knows all the settings ; 
and Chaves et al.\ \cite[A.3]{Chaves2017Causal} 
showed through convex optimization
the class corresponding to a 3-qubit linear graph-state  in 1-LHV* 
--- the class $\qty{(1,3), (2,3), (1,2,3)}$ in their terminology --- 
is nonsignalling boring for binary inputs and outputs.

\section{Conclusion} \label{sec:conc}

From any graph states with more than three connected vertices we constructed a set of Pauli measurements on a $d$-inflated version of the original graph. The set contradicts any LHV model even assisted by classical communication along the graph's edges up to distance $d$. We state the violation in terms of a GHSZ-like paradox which we can write it in terms of a Bell inequality. The construction and proof works for any integer $d$, such that successive inflation is also possible.

Furthermore, we found a set of Pauli measurements for the smallest graph state that is robust against a LHV description including nearest-neighbour classical communication, the circular graph on $5$ qubits. The smallest graph state that defies a nearest-neighbour Communication-Assisted LHV model using binary inputs and outputs is a linear graph on $4$ qubits; 
we provide a Bell inequality that uses Clifford operators for this case.

These results are useful in two directions at least.
Firstly, we have found much smaller examples than previously, which are minimal in size using only four qubits, well within the reach of many experimental labs.
Secondly, the existence of correlations which cannot be explained by communication assisted LHV has already found applications in proving classical and quantum separation for shallow circuits \cite{bravyi2018quantum,bravyi2020quantum}, distributed computing \cite{gall2018quantum}, and giving novel ways to certify randomness \cite{coudron2018trading}. Given the extent that graph states are used in quantum information, from fault tolerance to quantum sensing, this work gives a good route to finding more applications or improving on those existing.

\bibliography{generic}

\end{document}